\documentclass[12pt]{article}

\usepackage[colorlinks=true,linktoc=page]{hyperref}

\usepackage{url}

\usepackage{graphicx,psfrag}
\usepackage{caption}
\usepackage{amsmath,amsthm,amsfonts,amssymb}
\usepackage{enumitem}
\usepackage{comment}

\usepackage{authblk}

\newtheorem*{theorem}{Theorem}
\newtheorem{thm}{Theorem}

\newtheorem{prop}{Proposition}
\newtheorem{defn}{Definition}

\newtheorem{lemma}{Lemma}

\newtheorem{remark}{Remark}

\numberwithin{equation}{section}

\usepackage{xcolor}
\newcommand{\re}{\textcolor{red}}
\newcommand{\bl}{\textcolor{blue}}

\newcommand {\Om}{\Omega}

\usepackage{xcolor}
\usepackage{graphicx}
\usepackage{import} 

\definecolor{inkscapeblue}{HTML}{476A9A}
\definecolor{inkscapered}{HTML}{ec1818}
\definecolor{inkscapeorange}{HTML}{bda343}
\definecolor{inkscapegreen}{HTML}{2d5124}
\definecolor{inkscapegray}{HTML}{000000}

\title{Spacetime Bartnik Mass Positivity and Temporal Monotonicity for Black Holes}

\author[1]{Lars Andersson}
\affil[1]{Kungliga Tekniska högskolan\\~~~~~~~~~~~~~~~~~~~~~~~~~~~~~~~~~~~~~~~~~~~~~~~ 100 44 Stockholm, Sweden}

\author[2]{Marcus Khuri} 
\affil[2]{Department of Mathematics,
Stony Brook University\\~~~~~~~~~~~~~~~
Stony Brook, NY 11794, USA\\~~~~~~~~~~~~~~~~~
marcus.khuri@stonybrook.edu}

\author[3]{Marc Mars} 
\affil[3]{Institute of Fundamental Physics and Mathematics\\
Universidad de Salamanca\\ Plaza de la Merced s/n\\~~~~~~~~~~~~~~~~~~ E-37008 Salamanca, Spain\\~~~~~~~~~~~~~~~~~~~~~~~~~~~~~~~~~~~~~~~~~~~~~~~~~~~~~~~~~~~~~~~~
marc@usal.es}

\author[4]{Walter Simon} 
\affil[4]{Fakult\"at f\"ur Physik, Universit\"at Wien\\~~~~~~~~~~~~~~
Boltzmanngasse 5, 1090 Wien, Austria\\~~~~~~~~~~~~~~~
walter.simon@univie.ac.at}

\date{}

\begin{document}

\maketitle

\begin{abstract}
We define a quasilocal mass of Bartnik type, and establish its positivity and temporal monotonicity
properties for two classes of domains associated with black holes. More precisely, we first show that the quasilocal
mass is strictly positive for spacelike hypersurfaces that are: compact with apparent horizon boundary or noncompact with asymptotically flat ends and containing an apparent horizon in any admissible extension. Secondly, we show that the quasilocal mass is monotonically nondecreasing in time within evolutionary scenarios related to the two aforementioned settings.
\end{abstract}

\section{Introduction}
\label{sec1} \setcounter{equation}{0}
\setcounter{section}{1}

\subsection{Motivation} 

Gravitational energy is non-local  and hence it is an interesting problem to find useful quasilocal notions of mass or energy-momentum \cite{Pen82}. Given a spacelike hypersurface
$\Omega$ in spacetime, its quasilocal mass should measure the total mass or energy-momentum contained in $\Omega$. There are many competing definitions of quasilocal masses with interesting properties \cite{Sza09}. 

Here we consider a quasilocal mass $m(\Om)$ of Bartnik type 
\cite{RB1,RB2,RB3,SM}. We recall that Bartnik's idea is to define `admissible' extensions of $\Omega$ which are 
asymptotically flat, and satisfy an energy condition such that the ADM masses of the extensions are well defined and nonnegative. A key issue is to find a suitable `no-horizons' condition, that should render the infimum of the ADM masses at a designated end of all admissible extensions nonnegative, and positive unless the original domain $\Omega$ arises from Minkowski space. This problem can be studied in the Riemannian setting, where the no-horizons condition is expressed in terms of the absence of minimal surfaces, or in the general case of initial data sets where one might expect apparent horizons to play a role.
While the Riemannian case has been studied extensively and is fairly well understood, there is very little known about the general setting involving initial data. This discrepancy is in part due to the different status of the
Penrose inequality in the respective settings.

In this work we are on the one hand concerned with positivity proofs for $m(\Om)$, (cf Theorems \ref{comp} and \ref{noncomp}), as applications of the recently obtained Penrose-like inequality  \cite[Theorem 1.1]{ABKK}. On the other hand, in a spacetime foliated by spacelike hypersurfaces, we prove short-time monotonicity 
of $m(\Om)$ along marginally outer trapped tubes as well as for generic achronal
surfaces. The former setting (Theorem \ref{moncomp}) fits the intuitive picture of a ‘black hole’ which swallows energy upon time evolution. It may be compared with, although distinguished from,
the well-known area laws, which state that the areas of event horizons [20,
15] and dynamical horizons [9], do not decrease to the future. 
On the other hand, our second monotonicity result (Theorem \ref{monnoncomp}) seems much less intuitive, as the achronal boundary is not specified. However, the topologcal requirements for this result are such that apparent horizons are necessarily present as well, if only in any admissible extensions.

We anticipate here that in the positivity Theorems \ref{comp} and \ref{noncomp} we admit data which satisfy the dominant energy condition, while we restrict ourselves to vacuum domains in Theorems
\ref{moncomp} and \ref{monnoncomp}. Nevertheless, for technical reasons we need to admit matter fields 
in the admissible extensions of the monotonicity results as well.


\subsection{Standard Definitions}
\label{std}
Throughout the paper all manifolds are assumed to be smooth, connected 
and orientable unless indicated otherwise. An \textit{initial data set} for the Einstein equations is a triple $(M,g,k)$, consisting of a $3$-dimensional manifold $M$ (possibly with boundary), a complete Riemannian metric $g$, along with a symmetric 2-tensor $k$ representing the second fundamental form of an embedding into spacetime.  These quantities are assumed to be smooth and satisfy the constraint equations
\begin{equation}
16 \pi \mu=R+(\text{Tr}_g k)^2-|k|^2, \qquad\text{ }\text{ }
8\pi J=\mathrm{div}\left(k-(\mathrm{Tr}_g k)g\right),
\end{equation}
where $R$ is the scalar curvature of $g$, and $\mu$, $J$ are the energy-momentum density of matter fields. 
The \emph{dominant energy condition} holds if $\mu\ge |J|$. Moreover, the data will be referred to as \emph{asymptotically flat} (AF) if there exists a compact subset $\mathcal{K}$ for which $M\setminus \mathcal{K}=\cup_{a=1}^{a_0}M^{a}_{end}$, so that the ends $M_{end}^{a}$ are pairwise disjoint and each is diffeomorphic to the complement of a Euclidean ball $\mathbb{R}^3 \setminus B$. Furthermore, if $\varphi$ is the diffeomorphism from Euclidean space with Cartesian coordinates $x$ to an end then
\begin{align}\label{1}
\begin{split}
\varphi^* g=\delta+O_2(|x|^{-q}), \qquad\quad
\varphi^* k= O_1(|x|^{-q-1}),\\
\varphi^*\mu, \varphi^*J=O(|x|^{-2q-2}),\qquad\quad \varphi^*\mathrm{Tr}_g k=O(|x|^{-2q-1}),
\end{split}
\end{align}
for some $q>\tfrac{1}{2}$ where $O_l(|x|^{-q})$ represents a tensor in the weighted space $C^l_{-q}(\mathbb{R}^3)$. Note that the additional decay on the trace of $k$ is usually not included in the AF definition , however it is included here to facilitate the use of \cite{ABKK}. The ADM energy and linear momentum of each end are well-defined \cite{Bartnik,Chrusciel} with these asymptotics and are given by
\begin{equation}
E=\lim_{r\rightarrow\infty}\frac{1}{16\pi}
\int_{S_{r}}\sum_i (g_{ij,i}-g_{ii,j})\nu^{j}dA,
\end{equation}
\begin{equation}
P_i =\lim_{r\rightarrow\infty}\frac{1}{8\pi}
\int_{S_{r}}(k_{ij}-(\text{Tr}_g k)g_{ij})\nu^{j}dA,
\end{equation}
where $S_r$ are coordinate spheres with unit outer normal $\nu$ and area element $dA$.
The ADM mass is then the Lorentz length of the energy-momentum vector, that is $m=\sqrt{E^2-|P|^2}$.

Consider a closed separating hypersurface $\Sigma\subset M$ with null expansions $\theta_{\pm}= H \pm \mathrm{Tr}_{\Sigma} k$. Here, $H$ denotes the mean curvature of $\Sigma$ obtained as the tangential divergence of the unit normal $\nu$ pointing towards a designated end $M_{end}^1$. The null expansions are themselves (spacetime) mean curvatures, namely in the null directions $\nu\pm n$ where $n$ represents the future pointing timelike unit normal to the slice $(M,g,k)$. These quantities can be interpreted physically as determining the rate of change of area for a shell of light emanating from the surface in the outward future/past direction, and thus may be used to assess the strength of the gravitational field. The gravitational field is interpreted as strong near the surface $\Sigma$ if it is \textit{outer or inner trapped}, that is $\theta_+< 0$ or $\theta_- <0$. Moreover, $\Sigma$ is called a \emph{marginally outer or inner trapped surface} (MOTS or MITS) when $\theta_+ = 0$ or $\theta_-=0$. These types of surfaces are also referred to as future or past apparent horizons, and naturally arise as boundaries of future or past trapped regions \cite{AM}. Moreover, a collection $\Sigma$ of disjoint MOTS and MITS components will be called an \textit{outermost apparent horizon} with respect to $M_{end}^1$, if $\Sigma$ is not enclosed from the perspective of $M_{end}^1$ by any other disjoint collection of apparent horizon components. The existence of an outermost apparent horizon for each end follows from \cite{AM}, by first finding the outermost MOTS and outermost MITS separately, and then removing components until all are disjoint. Although the outermost MOTS and outermost MITS are individually unique, the second step in which components are removed entails a choice, and thus the resulting outermost apparent horizon may not be unique.

\subsection{ Definitions of Bartnik type}

\label{Bld}

\begin{defn}[\bf $\mathbf{b}$-Admissible Extension]\label{ae}
Let $(\Omega,\mathrm{g}, \mathrm{k})$ be an initial data set with nonempty boundary $\partial\Omega$, such that $\Omega$ is either compact or possesses $a_0 >0$ AF ends. Let $\mathbf{b}$ denote a collection of boundary components. A $\mathbf{b}$-admissible extension of $(\Omega,\mathrm{g},\mathrm{k})$ 
is an AF initial data set $(M,g,k)$ without boundary satisfying the following conditions.
\begin{enumerate}
\item $(M,g,k)$ has one end when $\Omega$ is compact and has $a_0 +1$ ends otherwise. The additional end will be
referred to as the designated end, and will be denoted by $M_{end}^1$.

\item The boundary of the unbounded component of $M\setminus\Omega$ consists of the $\mathbf{b}$ boundary components of $\Omega$.

\item $(\Omega, \mathrm{g}, \mathrm{k})$ arises from an isometric embedding into $(M,g,k)$ which preserves the extrinsic curvature.

\item $(M,g,k)$ satisfies the dominant energy condition.

\item Any closed minimal hypersurface of $M$ is contained within $\Om$.

\end{enumerate}


\end{defn}

\begin{figure}[h!]
    \centering
    \includegraphics{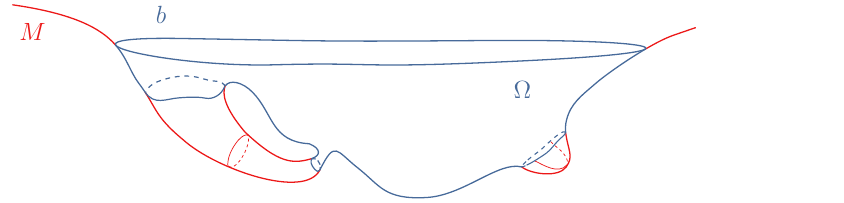}
    \caption{Example of an admissible extension as in Definition \ref{ae}. The domain $\Omega$ is depicted in blue while the extension $M \setminus \Omega$ is shown in red. In this case, the collection of boundary components 
    $\mathbf{b}$ has a single element. 
    Note that the extended manifold $(M,g)$ has all closed minimal surfaces contained in $\Omega$.}
      \label{Fig1}
  \end{figure}
  
\begin{remark}
For technical reasons concerning the proofs of the temporal monotonicity Theorems \ref{moncomp} and \ref{monnoncomp}, it will be convenient to also introduce the following slightly more stringent condition in place of (5).
\begin{enumerate}
\item[5*.] Any closed minimal hypersurface of $M$ is contained within the interior int($\Om$) of $\Om$.
\end{enumerate}
In Definition \ref{ae}, if condition (5) is replaced by (5*), we shall refer to the extension as \emph{strictly $\mathbf{b}$-admissible}. Note that while condition (5) admits minimal surfaces as components of $\partial\Omega$, this is no longer the case for (5*).  
\end{remark}

\begin{remark}
When $\Omega$ has several boundary components, the subcollection  determined by $\mathbf{b}$ will form the boundary of the noncompact component of $M\setminus\Omega$ containing the designated end, while the remaining components will be `capped-off'  by the compact components of $M\setminus\Omega$. 
Here `capping off' includes the possibility that subsets of such remaining boundary components are joined by `tubes', as long as the resulting  construction respects conditions (5) or (5*). See fig. \ref{Fig1} for an example.
\end{remark}

\begin{remark}
 \label{by} We avoid  defining extensions $(M,g,k)$ with boundaries; the reason is  technical and 
 arises from the proof of the temporal monotonicity results Theorems \ref{moncomp} and \ref{monnoncomp}. 
\end{remark}

\begin{remark} 
\label{nomin}  
We recall that in Bartnik's original definition of admissible extensions in the Riemannian setting \cite{RB1}, minimal surfaces are not admitted anywhere (not even inside $\Om$). A corresponding modification of point 5. of the above definition is viable only if $\Om$ is compact and the extensions have trivial second homology.
\end{remark}
\begin{remark}
As to the definition by Huisken and Ilmanen in the Riemannian context 
\cite[Section 9]{HI}, it relates to the above one with the following differences:
\begin{itemize}
 \item    `Strict admissibility' as defined above corresponds to `admissibility' in \cite{HI}.
 \item As mentioned in Remark \ref{by}  we do not admit extensions with boundaries, in contrast to \cite{HI}.
\end{itemize}
\end{remark}

\begin{defn}[\bf Bartnik-Type Mass] \label{bm}
Let $(\Omega,\mathrm{g}, \mathrm{k})$ be an initial data set with boundary, such that $\Omega$ is either compact or possesses a finite number of AF ends. If the set of $\mathbf{b}$-admissible extensions is nonempty, then a Bartnik mass\footnote{If $\mathbf{b}$ consists of all boundary components then the qualifier $\mathbf{b}$ will be removed from the notation.} of this data is defined to be 
\begin{equation*}
\label{mgen}
m_{\mathbf{b}}(\Omega) =  \inf \{m_{ADM}(M_{end}^1,g,k)\mid (M,g,k) ~\mbox{is a $\mathbf{b}$-admissible extension of} ~\Om \}, 
\end{equation*}
where $m_{ADM}(M_{end}^1,g,k)$ is the ADM mass of the designated end.
\end{defn}
We remark that the domain $\Om$ is restricted by the requirement that there exists an admissible extension. (Note that we do not admit infinite values for $m_{\mathbf b}$). In particular, this excludes domains with mean concave ($H \le 0$) boundary. We anticipate that our positivity results Theorems \ref{comp} and \ref{noncomp} do not make any further requirement on $\partial\Om$,
while the monotonicity results Theorems \ref{moncomp} and \ref{monnoncomp} require strict mean convexity ($H \ge 0$).

\subsection{Positivity}

\label{posi} 

The positive mass theorem \cite{SY,Witten} implies that the Bartnik mass is always nonnegative, and the question of its strict positivity outside of the ground state was discussed in Bartnik's original proposal \cite{RB1,RB2}. This has been satisfactorily answered in the time symmetric case by Huisken-Ilmanen \cite[Positivity Property 9.1]{HI} and Dong-Song \cite[Theorem 5.1]{DS}, see also the discussion in Anderson-Jauregui \cite{AJ}. However, it does not appear that this question has been previously addressed in the spacetime case. Here we will establish two different strict positivity results, both of which are associated with the presence of an apparent horizon.

\begin{thm}[\bf Positivity for Compact Data]
\label{comp}
Let $(\Omega,\mathrm{g}, \mathrm{k})$ be an initial data set with boundary, 
such that $\Omega$ is compact. Assume that a nonempty subset of boundary components $\mathbf{b}$
consists entirely of MOTS and MITS, and that a $\mathbf{b}$ admissible
extension exists. Then the corresponding Bartnik mass is strictly positive,
$m_{\mathbf{b}}(\Omega)>0$.
\end{thm}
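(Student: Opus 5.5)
The strategy is to bound the ADM mass of every $\mathbf{b}$-admissible extension from below by a uniform positive constant determined by $\Omega$ alone. The tool is the Penrose-like inequality of \cite[Theorem 1.1]{ABKK}. Roughly, it says the following: for an asymptotically flat initial data set satisfying the dominant energy condition, with the trace decay built into (\ref{1}), whose boundary is an outermost apparent horizon (or which contains one), the ADM mass of the designated end is bounded below by a positive quantity. That quantity is controlled by the horizon, for instance through $\sqrt{|\Sigma|/16\pi}$ or a multiple of it, perhaps under a topological condition on the exterior.

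\textbf{Step 1.} Fix a $\mathbf{b}$-admissible extension $(M,g,k)$ and let $M_\Omega^{ext}$ be the unbounded component of $M\setminus\Omega$. By condition (2), its boundary is exactly the collection $\mathbf{b}$, which consists of MOTS and MITS. So the region of $M$ exterior to $\Omega$ toward $M_{end}^1$ contains apparent horizons. By \cite{AM}, as recalled in Section \ref{std}, there is an outermost apparent horizon $\Sigma$ with respect to $M_{end}^1$, and it encloses (or coincides with) the components of $\mathbf{b}$.

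\textbf{Step 2.} Obtain a lower bound on the size of $\Sigma$ that is independent of the extension. This is where condition (5) enters. Any closed minimal surface of $M$ lies in $\Omega$, so the exterior region $M_\Omega^{ext}$ contains no closed minimal surfaces. Consider the domain in $M$ enclosing $\Omega$. A standard minimization (area-minimizing hull / outer-minimizing envelope) argument shows that the area of the outer-minimizing hull of $\partial M_\Omega^{ext}$ cannot degenerate: its boundary either lies partly on $\mathbf{b}\subset\partial\Omega$ or is a minimal surface, and the latter would have to lie in $\Omega$. The same reasoning applies to $\Sigma$. If $\Sigma$ lies outside $\Omega$, then comparing its area with the minimizing enclosure gives $|\Sigma|\ge A_0$, where $A_0>0$ is, for example, the least area among surfaces in $\Omega$ homologous to $\mathbf{b}$ (this is positive since $\Omega$ is compact and fixed). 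If instead $\Sigma$ meets $\mathbf{b}$ itself, the bound is immediate from $|\mathbf{b}|$.

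\textbf{Step 3.} Apply \cite[Theorem 1.1]{ABKK} to the exterior of $\Sigma$ in $M$, or to $M$ itself if the theorem is formulated with an interior horizon. The decay and dominant energy hypotheses are guaranteed by admissibility. This gives $m_{ADM}(M_{end}^1)\ge c(A_0)>0$. Taking the infimum over extensions yields $m_{\mathbf{b}}(\Omega)\ge c(A_0)>0$.

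\textbf{Main obstacle.} I expect Step 2, the uniform control of the horizon, to be hardest. The outermost horizon produced in an arbitrary extension could in principle be small or lie far out, and the precise form of the ABKK bound may involve quantities other than area. If an area bound proves elusive, I would instead use the fact that the ABKK inequality is strict away from the Schwarzschild/Minkowski case. I would then argue by contradiction: suppose a sequence of extensions has ADM masses tending to zero. The rigidity/stability of the positive mass theorem would then force the data near $\Omega$ to approach flat data. That is incompatible with $\mathbf{b}$ being fixed MOTS/MITS, because a closed surface in (nearly) Minkowski data cannot have $\theta_+=0$ or $\theta_-=0$ everywhere.
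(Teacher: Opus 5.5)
Your outline matches the paper: take an outermost apparent horizon $\Sigma$ enclosing $\mathbf{b}$, apply the Penrose-type inequality of \cite{ABKK}, then find a lower bound independent of the extension. However, Step 2 carries all the content, and it does not work as written.

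First, the paper uses the inequality as $m_{ADM}\geq\sqrt{\mathcal{A}/\mathcal{C}}$, where $\mathcal{A}$ is the least area needed to enclose $\Sigma$. Apparent horizons need not be outer-minimizing when $k\neq 0$, so you must bound $\mathcal{A}$, not $|\Sigma|$.

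Second, your constant $A_0$ can vanish. If $\Omega$ is a topological ball bounded by a single MOTS, then $\mathbf{b}=\partial\Omega$ is null-homologous in $\Omega$, and small geodesic spheres give $A_0=0$.

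Third, and most seriously, your dichotomy for the minimizing enclosure of $\mathbf{b}$ (or of $\Sigma$) only disposes of components that avoid the obstacle. Those would be closed minimal surfaces outside $\Omega$, which condition (5) forbids. So everything reduces to components that touch the obstacle. These consist of contact pieces together with minimal pieces \emph{with boundary}, about which (5) says nothing. They lie in $M\setminus\mathrm{int}(\Omega)$, whose geometry near $\partial\Omega$ can differ arbitrarily from one extension to another. The monotonicity formula therefore gives no uniform bound, since there is no uniform radius of controlled geometry at the contact points. A comparison with $|\mathbf{b}|$ fails as well, because the contact set may be tiny. Your assertion that this case is immediate from $|\mathbf{b}|$ is unfounded.

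The missing idea is to place the obstacle deep inside $\Omega$, where the geometry is fixed.
\begin{itemize}
\item Take $p\in\Omega$ and $r$ small, so that $B_{3r}(p)\subset\Omega$ and the spheres $\partial B_s(p)$, $s\le 2r$, are mean convex.
\item Let $\mathcal{S}$ be the outermost minimal-area enclosure of $\partial B_r(p)$ in $M$ (Huisken--Ilmanen). It is minimal away from $B_r(p)$.
\item If $\mathcal{S}\subset B_{2r}(p)$, then $\mathcal{S}=\partial B_r(p)$ by the maximum principle.
\item If $\mathcal{S}$ meets $\partial B_{2r}(p)$, the monotonicity formula in the fixed annulus $B_{3r}(p)\setminus B_r(p)$ bounds its area below.
\item If $\mathcal{S}$ avoids $B_{2r}(p)$, it is a closed minimal surface, so it lies in $\Omega$ by (5). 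Monotonicity in a fixed smooth extension $(\Omega',g')$ of $(\Omega,\mathrm{g})$, on balls of a fixed radius, then gives a uniform bound.
\end{itemize}
Every surface enclosing $\Sigma$ encloses $\Omega\supset B_r(p)$, so $\mathcal{A}\ge|\mathcal{S}|\ge c(\Omega)>0$.

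Your fallback argument is not a usable alternative. It would require a stability theorem for the positive mass theorem for general initial data sets with $k\neq 0$, and no such result is available.
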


\begin{thm}[\bf Positivity for Data With AF Ends]
\label{noncomp}
Let $(\Omega,\mathrm{g}, \mathrm{k})$ be an initial data set with boundary, 
such that $\Omega$ possesses a finite nonzero number of AF ends. 
Let $\mathbf{b}$ denote a collection of boundary components.
If a $\mathbf{b}$ admissible extension exists, then the corresponding Bartnik 
mass is strictly positive, $m_{\mathbf{b}}(\Omega)>0$.
\end{thm}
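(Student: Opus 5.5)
Theorem \ref{noncomp} asserts positivity for data with AF ends, and the plan is to reduce it to the setting of the Penrose-like inequality \cite[Theorem 1.1]{ABKK}. That inequality gives a lower bound on the ADM mass of an end in terms of the area of an outermost apparent horizon (roughly $m\ge c\sqrt{|\Sigma|}$ for a universal $c>0$), under the dominant energy condition and the decay assumptions \eqref{1}, including the extra decay of $\mathrm{Tr}_g k$.

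\textbf{Step 1: existence of a horizon.} Fix a $\mathbf{b}$-admissible extension $(M,g,k)$ with designated end $M^1_{end}$. Since $\Omega$ has $a_0>0$ AF ends, $M$ has at least two ends. The key topological observation is that a surface separating $M^1_{end}$ from the ends of $\Omega$ must exist, and we take one lying in a compact region. Applying the existence theory of \cite{AM} to the trapped regions relative to $M^1_{end}$, we obtain an outermost apparent horizon $\Sigma\neq\emptyset$ with respect to $M^1_{end}$. To see it is nonempty, note that near the other ends large coordinate spheres, viewed with normal pointing toward $M^1_{end}$, have $\theta_\pm<0$. They are therefore both outer and inner trapped barriers, and the barrier argument of \cite{AM} (in the spirit of Andersson--Metzger/Eichmair) produces MOTS/MITS enclosing these ends. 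Hence $|\Sigma|>0$.

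\textbf{Step 2: a uniform area bound.} Applying \cite[Theorem 1.1]{ABKK} on the region outside $\Sigma$ gives $m_{ADM}(M^1_{end})\ge c\sqrt{|\Sigma|}$. The difficulty is that $\Sigma$ depends on the extension, so we need a bound on $|\Sigma|$ independent of the extension. I would seek one by a separation argument: $\Sigma$ separates $M^1_{end}$ from every end of $\Omega$. Since $\Omega$ is fixed and contains those ends, any such $\Sigma$ must either intersect $\Omega$ or enclose the $\mathbf{b}$ components. Two sub-cases then need a lower area bound.
\begin{itemize}
\item If $\Sigma$ lies in $M\setminus\Omega$ and encloses $\Omega$, I would compare it to a least-area surface in its homology class. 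Condition (5) says no closed minimal surfaces lie outside $\Omega$, which prevents the area from collapsing; concretely, the minimizer of area among surfaces homologous to $\partial\Omega_{\mathbf b}$ in the exterior must touch $\Omega$.
\item If $\Sigma$ meets $\Omega$, a uniform lower bound should come from the geometry of $\Omega$ itself. This could use the isoperimetric profile of the fixed data on $\Omega$, or the fact that a separating surface in the fixed AF piece has area bounded below by the minimal separating area there. That quantity is positive by compactness of $\Omega$ minus neighborhoods of its ends, together with the AF structure.
\end{itemize}

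\textbf{Main obstacle.} The hard step is Step 2: obtaining an extension-independent lower bound for the horizon area, since MOTS are not area minimizers. I would first try replacing $|\Sigma|$ by the area of the outer-minimizing hull or least-area enclosure of $\Sigma$, which is no larger. Curvature estimates for stable MOTS would control how $\Sigma$ can degenerate. If $\Sigma$ shrinks toward a point or a thin neck, the resulting area and diameter estimates should give a contradiction with separation of the ends. Finally, taking the infimum over extensions yields $m_{\mathbf b}(\Omega)\ge c\,A_0>0$, where $A_0>0$ is the uniform area bound.
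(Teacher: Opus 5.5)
Your Step 1 matches the paper: large coordinate spheres in the ends of $\Omega$ act as trapped barriers, giving a nonempty outermost apparent horizon $\Sigma$ relative to $M^1_{end}$. The Penrose-type inequality \cite[Theorem 1.1]{ABKK} is used in the form $m_{ADM}(M^1_{end},g,k)\ge\sqrt{\mathcal{A}/\mathcal{C}}$, where $\mathcal{A}$ is the least area of a surface enclosing $\Sigma$, not $|\Sigma|$. So $\mathcal{A}$ is what needs a lower bound independent of the extension.

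That is where your proposal has a genuine gap: neither sub-case of Step 2 produces such a bound. In the first sub-case, knowing that the exterior least-area surface homologous to $\partial\Omega_{\mathbf b}$ must touch $\partial\Omega$ gives no area bound at all. Any local estimate at a contact point (monotonicity, or $C^{1,1}$ bounds for the obstacle problem) involves the metric of $M\setminus\Omega$ near $\partial\Omega$, and that metric is not controlled uniformly over admissible extensions. In the second sub-case, $\Sigma$ meeting $\Omega$ does not put $\Sigma$ (or its least-area enclosure) inside $\Omega$. Only a tiny piece may lie there, so isoperimetric or separating-area information for $(\Omega,\mathrm{g})$ says nothing. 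Curvature estimates for stable MOTS have the same defect, since they depend on the ambient $(g,k)$ of the extension.

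The missing idea is to avoid estimating anything attached to $\Sigma$ directly. Instead, compare with the least-area enclosure of a \emph{fixed} surface in $\Omega$, chosen so that condition (5) forces this enclosure into a fixed compact part of $\Omega$. Take $\mathbf{S}_r$ to be the union of large coordinate spheres in the AF ends of $\Omega$, the same barriers you used in Step 1.

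\begin{enumerate}
\item The maximum principle for MOTS/MITS \cite[Proposition 2.4]{AM} keeps $\Sigma$ out of the regions beyond $\mathbf{S}_r$. Hence $\Sigma$ encloses $\mathbf{S}_r$, and $\mathcal{A}\ge|\mathcal{S}|$, where $\mathcal{S}$ is the outermost minimal area enclosure of $\mathbf{S}_r$ with respect to $M^1_{end}$.
\item Since $\mathbf{S}_r$ has negative mean curvature with respect to the normal pointing toward $M^1_{end}$, $\mathcal{S}$ cannot touch $\mathbf{S}_r$. It is therefore a closed minimal surface.
\item By condition (5) of Definition \ref{ae}, $\mathcal{S}$ lies in $\Omega$, in fact in the precompact component of $\Omega\setminus\mathbf{S}_r$, which does not depend on the extension.
\item The monotonicity formula there, using a fixed smooth extension of $(\Omega,\mathrm{g})$ across $\partial\Omega$ as in the proof of Theorem \ref{comp}, gives $|\mathcal{S}|\ge c>0$ with $c$ depending only on $\Omega$.
\end{enumerate}

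Consequently $m_{\mathbf b}(\Omega)\ge\sqrt{c/\mathcal{C}}>0$. Without this device your Step 2 does not close.
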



Both theorems may be considered as `strong field' results in the sense that a MOTS/MITS
is always present, either in the boundary of $\Om$ as is the case for Theorem \ref{comp}, or 
in any admissible extension of $\Omega$ as is the case for Theorem \ref{noncomp} according to the barrier approach for MOTS existence  given by \cite[Theorem 1.1]{AM} and \cite[Theorem 1.1]{Eichmair}.

\subsection{Monotonicity} 

\label{mono}

We now turn our attention to temporal monotonicity properties of the Bartnik mass, a topic which does not appear to have been previously addressed in the literature. 
We first recall the monotonicity problem for nested domains
$\Om_1 \subset \Om_2 \subset M$ within a Riemannian manifold $M$. This holds trivially (in the sense that $m(\Omega_1) \le m(\Omega_2)$) for Bartnik's original definition \cite{RB1} and hence for the corresponding modification of 
Definition \ref{ae}, cf. Remark \ref{nomin}.  However, monotonicity is no longer guaranteed for Definition \ref{ae} (or any other definitions which admit minimal surfaces inside $\Om$.) In this context we recall the  monotonicity result of the general (spacetime) Bartnik mass \cite{MM}, still for nested surfaces within a smooth initial data set. However, this result supposes the existence of stationary minimizing extensions, which are not assumed in the present work. Moreover, such extensions are unlikely to exist for generic, stable MOTS, by analogy with the corresponding result for stable minimal surfaces \cite{MS}.

In contrast, the next pair of results applies to foliated spacetimes, and requires some preparatory discussion. Consider a 4-dimensional spacetime $\mathcal{M}$ foliated by spacelike hypersurfaces $\{N_t\}_{t\in I}$, where $I\subset\mathbb{R}$ is a nonempty interval. A \textit{marginally outer trapped tube} (MOTT) (adapted to this foliation) is a hypersurface $\mathcal{H}\subset\mathcal{M}$ which is foliated by MOTS $\Sigma_t =\mathcal{H}\cap N_t$. A corresponding \textit{MOTT domain} is a 4-dimensional submanifold of $\mathcal{M}$ with boundary that is foliated by spacelike hypersurfaces $\Omega_t \subset N_t$, such that $\partial\Omega_t =\Sigma_t$.

\begin{figure}[h!]
    \centering
    \includegraphics[width=1.0\textwidth]{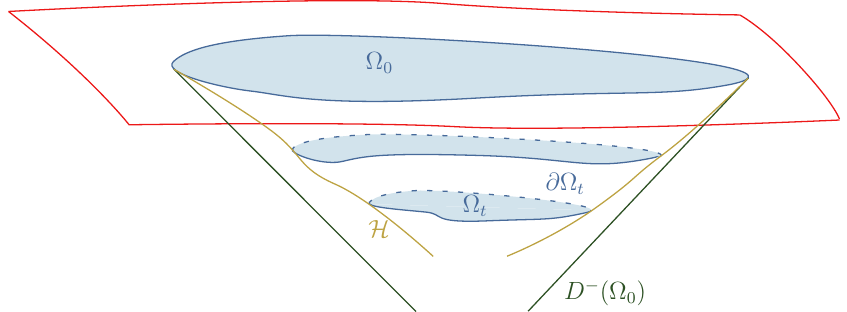}
    \caption{Visualization of the setting of the Monotonicity Theorems \ref{moncomp} and \ref{monnoncomp}. In Theorem \ref{moncomp}, the hypersurface 
    ${\mathcal H}$ is the smooth MOTT defined by the past evolution of $\partial
        \Omega_0$ with respect to the given foliation $(\Omega_t, \mathrm{g}_t, \mathrm{k}_t)_{t \in [- \varepsilon, 0]}$. In Theorem 4 ${\mathcal H}$  is any smooth spacelike hypersurface contained in  $D^{-} (\Omega_0)$ and containing $\partial \Omega_0$.}
          \label{Fig2}
        \end{figure}

  If a spacetime with foliation has the property that some leaf, say $N_0$, possesses a strictly stable MOTS $\Sigma_0 =\partial\Omega_0$, then the short time 
existence of a smooth MOTT and corresponding MOTT domain are given by Andersson-Mars-Simon \cite[Theorem 1]{AMS1}. Here, strict stability refers to the property that the principal eigenvalue of the MOTS stability operator \cite[Definition 3.1]{AMS2} is strictly positive. If $\mathcal{M}$ satisfies the null energy condition then the MOTT is achronal near $\Sigma_0$, and if further the null second fundamental form of $\Sigma_0$ with respect to the outward future direction does not vanish identically then the MOTT is spacelike everywhere near $\Sigma_0$ \cite[Theorem 9.9]{AMS2}. In this situation we will also refer to the associated MOTT domain as spacelike.

\begin{thm}[\bf Monotonicity Along a MOTT] 
\label{moncomp}
Let $(\Omega_0,\mathrm{g}_0, \mathrm{k}_0)$ be a vacuum initial data set with boundary, 
such that $\Omega_0$ is compact and each boundary component is a strictly stable and strictly mean convex MOTS.
For $\varepsilon>0$ let $\{(\Omega_t,\mathrm{g}_t,\mathrm{k}_t)\}_{t\in[-\varepsilon,0]}$ be a foliation by compact spacelike hypersurfaces of a spacelike MOTT domain contained within the past domain of dependence $D^-(\Om_0)$. 

If the initial data has an admissible extension, then for each sufficiently small $t\leq 0$ there exists an admissible
extension of $(\Omega_t,\mathrm{g}_t,\mathrm{k}_t)$ and the Bartnik mass is monotonically nondecreasing, that is
$m(\Omega_{t_1})\leq m(\Omega_{t_2})$ for all $t_1\leq t_2$ sufficiently small.
\end{thm}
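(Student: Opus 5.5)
The plan is to show that every admissible extension of a later slice $\Omega_{t_2}$ can be converted into an admissible extension of an earlier slice $\Omega_{t_1}$ with the \emph{same} ADM mass. Taking infima then gives $m(\Omega_{t_1})\le m(\Omega_{t_2})$. The conversion uses the domain of dependence of the later extension.

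\textbf{Step 1: the spacetime development.} Fix $t_1\le t_2\le 0$ close to $0$ and an admissible extension $(M,g,k)$ of $\Omega_{t_2}$.
\begin{itemize}
\item Since $\Omega_{t_2}$ embeds in $M$ and the tube portion $\bigcup_{t\in[t_1,t_2]}\Omega_t$ lies in $D^-(\Omega_{t_2})\subset D^-(\Omega_0)$, uniqueness of the vacuum Cauchy development puts the slab between $\Omega_{t_1}$ and $\Omega_{t_2}$ inside the domain of dependence of $\Omega_{t_2}$ in any development of $(M,g,k)$.
\item For the DEC matter in $M\setminus\Omega_{t_2}$ I will not need genuine evolution. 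Instead, construct a new hypersurface $M'$ that agrees with $\Omega_{t_1}$ on the inside and is glued to $M\setminus\Omega_{t_2}$ outside.
\item The gluing is done along the spacelike MOTT $\mathcal H$ between $\Sigma_{t_1}$ and $\Sigma_{t_2}$. Since $\mathcal H\subset D^-(\Omega_0)$ and is spacelike, the union $\Omega_{t_1}\cup(\mathcal H\text{-portion})\cup(M\setminus\Omega_{t_2})$ is achronal, but only piecewise smooth.
\item Smooth the corners in the vacuum region near $\Sigma_{t_1}$ and $\Sigma_{t_2}$. This requires that the development of $(M,g,k)$ exist in a small neighbourhood of $\Sigma_{t_2}$, which I expect local existence (for data satisfying the DEC near $\partial\Omega_{t_2}$) to provide for short times.
\end{itemize}
Induced data on a smoothed achronal spacelike hypersurface of a spacetime satisfying the DEC again satisfy the DEC. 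The new data agree with $(M,g,k)$ outside a compact set, so the ADM energy–momentum of the designated end is unchanged, and $\Omega_{t_1}$ embeds isometrically with its extrinsic curvature. This gives conditions (1)–(4) of Definition \ref{ae}.

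\textbf{Step 2: condition (5).} This is the main obstacle: no closed minimal surface of the new data may lie outside $\Omega_{t_1}$.
\begin{itemize}
\item I would use the strict mean convexity of $\partial\Omega_0$ and continuity in $t$ to get strict mean convexity of the $\Sigma_t$ for small $|t|$.
\item The added collar region is foliated by surfaces $C^2$-close to $\Sigma_0$, and hence strictly mean convex. By the maximum principle, no closed minimal surface can touch this foliation from inside the collar, so none enters the collar.
\item To rule out minimal surfaces further out, work with strictly admissible extensions, i.e. condition (5*), which keeps minimal surfaces off $\partial\Omega_{t_2}$. Strict mean convexity of the glued region then acts as a barrier: a minimal surface in the new extension lying partly outside $\Omega_{t_1}$ would have to intersect the collar, where the foliation excludes it.
\end{itemize}
One must check that replacing admissible by strictly admissible does not change the infimum. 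My first attempt would be a small perturbation of a given admissible extension that pushes minimal boundary components inward, using strict stability of the MOTS to get an outward mean-convex foliation.

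\textbf{Step 3: existence and conclusion.}
\begin{itemize}
\item Existence: applying the construction with $t_2=0$ to the given admissible extension of $\Omega_0$ yields admissible extensions of every $\Omega_t$ with $|t|$ small.
\item Monotonicity: each admissible extension of $\Omega_{t_2}$ produces one of $\Omega_{t_1}$ with equal mass, so $m(\Omega_{t_1})\le m(\Omega_{t_2})$.
\end{itemize}
The smallness in $t$ enters through the uniform mean convexity and the validity of the smoothing, both of which come from continuity at $t=0$.
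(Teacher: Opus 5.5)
\paragraph{The gap.} The gap is in Step~2, at exactly the point the whole theorem turns on. Write $E=M\setminus\mathrm{int}(\Omega_{t_2})$ and $M'=\Omega_{t_1}\cup C\cup E$.

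Your foliation argument only excludes a closed minimal surface $\Gamma\subset M'$ whose maximal leaf parameter is attained inside the collar $C$, where $\Gamma$ would touch a strictly mean convex leaf from the inside. Now take a minimal surface that starts in $\Omega_{t_1}$, crosses all of $C$ and continues into $E$. It attains that maximum on $\Sigma_{t_2}$, where there is no barrier, so the foliation excludes nothing. Admissibility or strict admissibility of $M$ does not help either. It constrains only minimal surfaces of $(M,g)$, and $\Gamma$ is not one, because the metric inside $\Sigma_{t_2}$ has been replaced. The only surfaces it rules out are those lying entirely in $E$.

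Whether replacing $\Omega_{t_2}$ by $\Omega_{t_1}\cup C$ creates minimal spheres straddling $\Sigma_{t_2}$ depends on the geometry of $E$, which is uncontrolled. For a fixed $M$, the most you can hope for is a compactness argument as in Proposition~\ref{prop}. That gives admissibility only for $t_2-t_1$ below a threshold depending on $M$, and nothing prevents the threshold from tending to zero along a minimizing sequence of extensions. Then the infimum argument collapses. This is precisely the obstruction the paper flags after Proposition~\ref{prop}.

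\paragraph{How the paper gets an extension-independent interval.} From one extension at the starting time it builds a family $M_t$ reducing to that extension at the starting time. This requires evolving a collar of $\partial\Omega$ in the extension, hence the perturbation to strict DEC (Lemma~\ref{tech}) and the Einstein--Vlasov realization (Lemma~\ref{thm:EV-collar-interface}). It then makes the relevant minimal surfaces strictly stable via White's bumpy metric theorem. Finally it shows the set of admissible times is open, by compactness plus the maximum principle at the strictly mean convex $\partial\Omega_t$. It is also closed, by the implicit function theorem at the strictly stable outermost minimal surface. So that set is the whole interval where $\partial\Omega_t$ is strictly mean convex, at the cost of an arbitrarily small change in mass.

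\paragraph{Two smaller points.} First, local existence of a DEC development for merely DEC data near $\Sigma_{t_2}$ is not available in general (Gl\"ockle). You could avoid it by doing all smoothing on the vacuum side, letting the new slice agree with $\Omega_{t_2}$ on an inner collar. But then $M'$ does not converge smoothly as $t_1\to t_2$, since bends of angle bounded below are squeezed into shrinking regions. So even your existence claim in Step~3 has no compactness argument behind it.

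Second, the closing detour in Step~2 is unnecessary. By strict mean convexity and the maximum principle, every admissible extension is automatically strictly admissible.
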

We note that the condition of strict mean convexity implies that the admissible
extensions mentioned above are automatically strictly admissible, by virtue of the maximum principle. This is why we only need to define a single Bartnik-type mass, namely  (\ref{mgen}), using admissible extensions.
It should also be noted that positivity of all Bartnik masses appearing in the monotonicity results is guaranteed by Theorems \ref{comp} and \ref{noncomp}.  

As mentioned in Sect. 1.1 already, Theorem \ref{moncomp} fits the intuitive picture of a `black hole' which swallows energy upon time evolution. In contrast,  the achronal boundary is not specified in the following theorem  which, however, is restricted to non-compact data. 
Of course this theorem holds in particular for MOTTs.

\begin{thm}[\bf Monotonicity Along an Achronal Tube]
\label{monnoncomp}
Let $(\Omega_0,\mathrm{g}_0, \mathrm{k}_0)$ be a vacuum initial data set with strictly mean convex boundary, 
such that $\Omega_0$ possesses a finite nonzero number of asymptotically flat ends. 
For $\varepsilon>0$ let $\{(\Omega_t,\mathrm{g}_t,\mathrm{k}_t)\}_{t\in[-\varepsilon,0]}$  be a foliation by spacelike hypersurfaces 
of a 4-dimensional submanifold with boundary within the past domain of dependence $D^-(\Om_0)$, such that each $\Omega_t$ contains corresponding asymptotically flat ends and
$\{\partial\Omega_t\}_{t\in[-\varepsilon,0]}$ forms a spacelike hypersurface. 

Then the conclusions of Theorem \ref{moncomp} hold: If the initial data
has a  admissible extension, then for each sufficiently small $t\leq 0$ there exists an admissible extension of $(\Omega_t,\mathrm{g}_t,\mathrm{k}_t)$ and the Bartnik mass is monotonically nondecreasing, that is
$m(\Omega_{t_1})\leq m(\Omega_{t_2})$ for all $t_1\leq t_2$ sufficiently small.
\end{thm}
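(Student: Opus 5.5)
The plan is to follow the proof strategy I would adopt for Theorem \ref{moncomp}: transplant any admissible extension of $\Omega_0$ into an admissible extension of $\Omega_t$ with the same designated end, so the ADM mass is unchanged. Fix an admissible extension $(M,g,k)$ of $(\Omega_0,\mathrm{g}_0,\mathrm{k}_0)$. Since $\Omega_0$ is vacuum and $(M,g,k)$ satisfies the dominant energy condition, I would solve the Einstein equations (coupled to suitable matter, e.g.\ the same matter model used in the extension, or simply the data with its $\mu,J$ evolved by a matter system with well-posed hyperbolic evolution satisfying the DEC) to obtain a globally hyperbolic development $\mathcal{D}$ of $(M,g,k)$. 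By domain of dependence, $D^-(\Omega_0)$ is isometric to the corresponding region of $\mathcal{D}$. Hence the foliation $\{\Omega_t\}$ and the tube $\mathcal{H}=\cup_t\partial\Omega_t$ embed isometrically into $\mathcal{D}$, preserving the induced data $(\mathrm{g}_t,\mathrm{k}_t)$.

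Next I would construct, for each small $t\le 0$, a Cauchy hypersurface $M_t\subset\mathcal{D}$ which contains $\Omega_t$, agrees with $M$ near all AF ends (in particular near $M^1_{end}$ and outside a compact set), and is spacelike. Because $\mathcal{H}$ is spacelike and lies to the past of $\Omega_0$, one can glue $\Omega_t$ across $\partial\Omega_t$ to a smooth spacelike hypersurface interpolating back to $M\setminus\Omega_0$ within a small neighborhood, with $t\mapsto M_t$ smooth and $M_0=M$. Induced data on $M_t$ satisfy the constraints and the DEC since the spacetime does. The ends of $\Omega_t$ correspond to those of $\Omega_0$, and $M_t$ coincides with $M$ near $M^1_{end}$, so the designated end's ADM energy-momentum is unchanged: $m_{ADM}(M_t)=m_{ADM}(M)$. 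The decay conditions (\ref{1}), including the extra decay of $\mathrm{Tr}\,k$, also persist there. Conditions (1)–(4) of Definition \ref{ae} then hold with $\mathbf{b}=\partial\Omega_t$ (all boundary components).

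The main obstacle is condition (5): all closed minimal surfaces of $(M_t,g_t)$ must lie in $\Omega_t$. Here I would use that the extension is automatically strictly admissible, by strict mean convexity of $\partial\Omega_0$ and the maximum principle. So any closed minimal surface of $M$ lies in $\mathrm{int}(\Omega_0)$, with a positive distance from $\partial\Omega_0$, and there is a collar of $\partial\Omega_0$ in $M\setminus\Omega_0$ foliated by strictly mean convex surfaces. For $t$ small, $g_t$ on $M_t\setminus\Omega_t$ is $C^2$-close to $g$ on $M\setminus \Omega_0$ (via the natural identification) and equals it outside a compact set. I would argue by contradiction: take a sequence $t_i\to0$ and closed minimal surfaces $S_i\subset M_{t_i}$ not contained in $\Omega_{t_i}$. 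A barrier argument using the foliation of the far region by large coordinate spheres of positive mean curvature, together with the perturbed mean convex collar near $\partial\Omega_{t_i}$, which is still strictly mean convex for small $t$, forces $S_i$ to avoid the collar and the far region entirely. Hence $S_i$ must enter the compact region of $M_{t_i}\setminus\Omega_{t_i}$. Area bounds then come from outer-minimizing replacements, or from stability of the outermost minimal surfaces. Curvature estimates give subconvergence to a closed minimal surface in $M$ meeting $\overline{M\setminus\Omega_0}$, contradicting strict admissibility. Care is needed to reduce to outermost, hence stable, minimal surfaces to get compactness; this step is where I expect the real work. Uniformity of $\varepsilon$ in time also matters, and it follows from this contradiction argument applied to the single fixed extension.

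Finally, monotonicity follows. For $t_1\le t_2\le 0$, apply the construction with $\Omega_{t_2}$ in place of $\Omega_0$, restricting the foliation to $[t_1,t_2]\subset D^-(\Omega_{t_2})$. This shows every admissible extension of $\Omega_{t_2}$ yields one of $\Omega_{t_1}$ with equal designated ADM mass. Taking the infimum gives $m(\Omega_{t_1})\le m(\Omega_{t_2})$. To keep $\varepsilon$ independent of the chosen extension, I would instead always start from extensions of $\Omega_0$ directly. More simply, I would note that strict mean convexity of $\partial\Omega_t$ persists for small $t$, so the argument at time $t_2$ requires only that $\partial\Omega_{t_2}$ be strictly mean convex, which also makes the new extension strictly admissible.
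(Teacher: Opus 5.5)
Your construction of $M_t$ and your contradiction argument for condition (5) reproduce the paper's Proposition \ref{prop}. The paper stops at exactly that point and says it is not enough.

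\textbf{Main gap: the time interval depends on the extension.} The compactness argument along $t_i\to 0$ is run for a \emph{fixed} extension $M$. It therefore yields $\varepsilon_1=\varepsilon_1(M)$, which depends on the geometry of $M$ near $\partial\Omega_0$ and on how close $M$ comes to carrying a closed minimal surface that crosses $\partial\Omega_0$.

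To get $m(\Omega_{t_1})\le m(\Omega_{t_2})$ for fixed $t_1<t_2$, you must handle every member of a minimizing sequence $M_n$ for $m(\Omega_{t_2})$. Nothing prevents $\varepsilon_1(M_n)\to 0$. What your argument actually proves is only $\limsup_{t\to t_2^-} m(\Omega_t)\le m(\Omega_{t_2})$, and strictly decreasing functions satisfy that.

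Neither of your remedies closes this:
\begin{itemize}
\item Starting from extensions of $\Omega_0$ leaves exactly the same dependence on the chosen extension.
\item Strict mean convexity of $\partial\Omega_t$ only excludes closed minimal surfaces touching $\partial\Omega_t$ from inside. That gives \emph{openness} of the set of good times, but not \emph{closedness}. Along the family $M_t$, a degenerate closed minimal surface reaching into $M_t\setminus\Omega_t$ can appear at an intermediate time without being a limit of minimal surfaces at nearby good times.
\end{itemize}
Separately, your barrier claim is wrong. A collar foliated by mean convex leaves does not force $S_i$ to avoid the collar; it only forbids the outermost part of a closed minimal surface from lying in it.

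\textbf{What the paper does instead.} It runs a continuity argument on the whole interval $[-\varepsilon,0]$. Here $\varepsilon$ is fixed only by strict mean convexity of the $\partial\Omega_t$, so it is independent of the extension.
\begin{itemize}
\item Openness is your maximum principle argument.
\item Closedness comes from White's bumpy metric theorem, applied on the bounded exterior region $U_t$. It makes the outermost minimal surface at a bad time strictly stable, so it persists to nearby times by the implicit function theorem.
\item Changing $g$ alone would destroy the dominant energy condition. So the data are first perturbed to the strict dominant energy condition, with ADM mass changed by less than $\delta(t)$, where $\delta$ vanishes at the starting time (Lemma \ref{tech}).
\end{itemize}

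\textbf{Second gap: the evolution step.} An admissible extension is only DEC initial data, with no matter model attached. By Gl\"{o}ckle's examples, such data need not embed in any spacetime satisfying the dominant energy condition. So ``evolve with the same matter model'' is not available. The paper first perturbs the exterior to strict DEC, then realizes it by Vlasov matter vanishing to infinite order at $\partial\Omega_0$ (Lemma \ref{thm:EV-collar-interface}), and only then evolves the collar.

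Without these ingredients your argument gives only the extension-dependent short-time existence of Proposition \ref{prop}, and even that only modulo the matter-model issue. It does not give the monotonicity statement.
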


We note that, compared to the positivity results Theorems \ref{comp} and \ref{noncomp}, the requirements of Theorems \ref{moncomp} and \ref{monnoncomp} are more restrictive in three respects:
They hold for vacuum only, and $\partial\Om$ is required to be strictly stable and strictly convex.
\\

\noindent\textbf{Acknowledgements.} The authors would like to thank Jan Metzger for discussions at the early stages of this work. MK acknowledges support from NSF Grant DMS-2405045.  MM acknowledges support under projects PID2024-158938NB-I00 (MICIU) and SA097P24 (JCyL).
The research of WS was funded  by the Austrian Science Fund (FWF) [DOI 10.55776/P35078]. For open access purposes, the authors have applied a CC BY public copyright license to any accepted manuscript version arising from this submission.

\section{Positivity Proofs}
\label{pp}

\subsection{Proof of Theorem \ref{comp}}

Let $(M,g,k)$ be a $\mathbf{b}$-admissible extension (cf Def. \ref{ae}) of the compact initial data set $(\Omega,\mathrm{g}, \mathrm{k})$ whose $\mathbf{b}$-boundary components
consist entirely of MOTS and MITS.  We recall from Sect. \ref{std} the definition and properties of an \textit{outermost apparent horizon} $\Sigma$ with respect to $M_{end}^1$.
Since $\mathbf{b}$ is nonempty the outermost apparent horizon is also nonempty, and moreover it separates $\Omega$ from infinity. We may then apply the Penrose inequality with suboptimal constant \cite[Theorem 1.1]{ABKK} to find
\begin{equation}
m_{ADM}(M_{end}^1,g,k)\geq \sqrt{\frac{\mathcal{A}}{\mathcal{C}}},
\end{equation}
where $\mathcal{A}$ is the minimum area required to enclose $\Sigma$ and $\mathcal{C}$ is a positive universal constant.
To prove strict positivity of the Bartnik mass, it then suffices to show that $\mathcal{A}$ has a uniform positive lower bound independent of the extension.

To verify the lower bound for $\mathcal{A}$, we will employ a strategy inspired by the proof of \cite[Positivity Property 9.1]{HI}. Consider a geodesic ball $B_{r}(p)\subset\Om$ centered at a point $p$ of radius $r>0$. Take $r<\mathrm{inj}(p)$ sufficiently small so that $B_{3r}(p)$ is still contained within $\Om$, and $\partial B_{s}(p)$ has positive mean curvature for all $s\in(0,2r]$. By \cite[Theorem 1.3 (iii)]{HI}, there exists a $C^{1,1}$-hypersurface $\mathcal{S}\subset M$ that represents the outermost minimal area enclosure of $\partial B_{r}(p)$ in $M$; moreover, this surface is $C^{\infty}$ and minimal where it does not contact the obstacle $B_{r}(p)$. By the maximum principle for minimal surfaces, $\mathcal{S}$ cannot lie completely within $B_{2r}(p)$ unless it agrees with $B_{r}(p)$. Furthermore, the monotonicity formula for minimal surfaces 
\cite[Theorem 17.6]{Simon} implies that any properly embedded minimal surface within $B_{3r}(p)\setminus B_{r}(p)$ that intersects $\partial B_{2r}(p)$ must have a uniform area lower bound. The last possibility is that $\mathcal{S}$ lies entirely outside of $B_{2r}(p)$. In this case $\mathcal{S}$ is a closed minimal surface, however since there are no closed minimal surfaces of $M$ that intersect $M\setminus\Omega$, we find that $\mathcal{S}\subset\Omega$. 
Next extend $(\Omega,g)$ smoothly across its boundary to a slightly larger Riemannian manifold $(\Omega',g')$ (unrelated to the Bartnik extension) in which $(\Omega,g)$ embeds isometrically;
because $\Omega$ is compact the geometry of $(\Omega',g')$ may be uniformly controlled in terms of that of $(\Omega,g)$. In particular, there exists a radius $r_0>0$ depending only on $(\Omega',g')$, such that the monotonicity formula can be applied on geodesic balls of radius $r_0$ centered at points of $\mathcal{S}$, yielding again a uniform area lower bound.
Therefore, having considered all possible scenarios for the minimal area enclosure, we find that $|\mathcal{S}|\geq c>0$ for some constant $c$ depending only on $({\Om},g)$, and since any surface that encloses $\Sigma$ must also enclose $\partial B_{r}(p)$ we have $\mathcal{A}\geq|\mathcal{S}|\geq c$.

\subsection{Proof of Theorem  \ref{noncomp}} Let $(M,g,k)$ be a $\mathbf{b}$-admissible extension of the  initial data set $(\Omega,\mathrm{g}, \mathrm{k})$, and recall that in this setting $\Omega$ possesses a finite number of asymptotically flat ends. Since large coordinate spheres in these ends are trapped from the perspective of the designated end $M_{end}^1$, and large coordinate spheres in the designated end are untrapped, it follows \cite[Theorem 3.3]{AEM} that there exists a MOTS and a MITS homologous to coordinate spheres of $M_{end}^1$. As discussed above, one may then find an outermost apparent horizon $\Sigma$ with respect to this designated end. At this stage, the same type of arguments given in the proof of Theorem \ref{comp} apply. More precisely, using the Penrose inequality with suboptimal constant it is sufficient to establish uniform positive lower bound for $\mathcal{A}$ independent of the extension, where again $\mathcal{A}$ is the minimal area required to enclose $\Sigma$ from the perspective of $M_{end}^1$. This, in turn, may be achieved by estimating the outermost minimal area enclosure $\mathcal{S}$ (with respect to $M_{end}^1$) of the surface $\mathbf{S}_r$ formed by the union of coordinate spheres having sufficiently large radius $r$ in the asymptotically flat ends of $\Omega$. Observe that $\mathbf{S}_r$ is enclosed by $\Sigma$ with respect to $M_{end}^1$, since the maximum principle for MOTS/MITS \cite[Proposition 2.4]{AM} prevents $\Sigma$ from entering these ends. Moreover, $\mathbf{S}_r$ has negative mean curvature with respect to the inward pointing normal, and thus $\mathcal{S}$ cannot intersect $\mathbf{S}_r$ and must be a  minimal surface. By (5) of Definition \ref{ae}, it follows that $\mathcal{S}$ lies within the precompact component of $\Omega\setminus \mathbf{S}_r$. Hence, its area may be bounded uniformly from below, yielding a corresponding lower bound for $\mathcal{A}$, as in the proof of Theorem \ref{comp}.

\section{Monotonicity Proofs} 

\label{mp}

\subsection{The basic idea}

\label{bi}

In this section we will establish Theorems \ref{moncomp} and \ref{monnoncomp}. 
(Cf Fig \ref{Fig2} or a vizualisation of either setting). As these results are restricted  to vacuum domains, a natural match would be to restrict to { \it (strictly) admissible vacuum extensions} $M_0$ of $\Om_0$ (defined by replacing the dominant energy condition by vacuum in point 4. of Definiton \ref{ae}). The basic idea is then to solve the vacuum Einstein Cauchy problem backward in time, 
and show that the data induced by the resulting spacetime  $D^-(M_0)$
on suitable hypersurfaces $M_t$ extending $\Om_t$ to the designated end  yield (strictly) admissible 
vacuum extensions. In the setting of Theorem 3, the basic result reads as follows.

\begin{prop}
\label{prop}
    Let $(\Omega_0,\mathrm{g}_0,\mathrm{k}_0)$ be a compact initial data set whose MOTS 
    boundary components are strictly stable, and let $(M_0,g_0,k_0)$ be a strictly admissible vacuum extension.
Let $\{(\Omega_t,\mathrm{g}_t,\mathrm{k}_t)\}_{t\in[-\varepsilon,0]}$ be a foliation by compact spacelike hypersurfaces of a spacelike MOTT domain contained within the past domain of dependence $D^-(\Om_0)$. \\
Then  there exists a positive $\varepsilon_1 <\varepsilon$ such that each  $(\Omega_t,\mathrm{g}_t,\mathrm{k}_t)$ has a strictly admissible vacuum extension $(M_t,g_t,k_t)$ for $t \in [-\varepsilon_1, 0]$.
\end{prop}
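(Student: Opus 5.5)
The plan is to obtain $M_t$ by solving the vacuum Einstein equations backward from $(M_0,g_0,k_0)$ and slicing the resulting development by hypersurfaces that agree with $\Omega_t$ on the domain and are suitable deformations of $M_0\setminus\Omega_0$ outside it. The key steps, in order, are as follows.

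\emph{Step 1 (development).} Since $(M_0,g_0,k_0)$ is a complete AF vacuum initial data set without boundary (this is why Remark \ref{by} excludes boundaries), local well-posedness of the vacuum Cauchy problem, with the AF asymptotics propagated, gives a globally hyperbolic vacuum development $\mathcal{D}$. Its past domain of dependence contains a uniform slab. It contains the MOTT domain near $t=0$ by uniqueness, because $\Omega_t\subset D^-(\Omega_0)\subset D^-(M_0)$ and the given spacetime is also vacuum on $D^-(\Omega_0)$; hence the two developments agree there.

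\emph{Step 2 (extension hypersurfaces).} Pick a smooth time function $\tau$ on $\mathcal{D}$ near $M_0$ whose level sets coincide with $\Omega_t$ inside the MOTT domain. In a collar outside $\mathcal{H}$, extend the spacelike MOTT $\mathcal{H}$ slightly past itself and interpolate so that the level sets are smooth across $\Sigma_t$. Far out, including the AF end, make the level sets agree with a standard boosted-free time slicing so that the AF decay (\ref{1}) holds, including the extra decay of $\mathrm{Tr}\,k$; for example, use maximal-type slicing near infinity, or take $\tau$ constant along the evolution normal far out.

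Set $M_t=\{\tau=t\}$. On $M_t$ we then have the following:
\begin{enumerate}
\item vacuum holds, so the dominant energy condition is satisfied;
\item $M_t$ is AF with one end;
\item $\Omega_t$ embeds with its induced data;
\item the topology matches that of $M_0$ by continuity, so the capped components persist and condition (2) holds.
\end{enumerate}

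\emph{Step 3 (condition 5*).} This is the main obstacle: we must show that every closed minimal surface of $(M_t,g_t)$ lies in $\mathrm{int}(\Omega_t)$ for small $|t|$. I would argue by contradiction. Suppose $t_j\to 0$ and there are closed minimal surfaces $S_j\subset M_t$ meeting $M_{t_j}\setminus \mathrm{int}(\Omega_{t_j})$.

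First, the area of $S_j$ is bounded above. Either the $S_j$ are stable outermost-type surfaces, obtained by replacing them with the outermost minimal enclosure, or we use curvature estimates together with a uniform barrier at infinity: large coordinate spheres are mean convex uniformly in $t$, so the $S_j$ stay in a fixed compact region.

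Next, since the metrics $g_{t_j}\to g_0$ smoothly on compact sets under a fixed diffeomorphism, Schoen--Simon/Sharp-type compactness for minimal surfaces of bounded area and index, or simply stable ones after the replacement, gives a limiting closed minimal surface $S_\infty$ in $(M_0,g_0)$. By (5*), $S_\infty\subset\mathrm{int}(\Omega_0)$, and so $S_\infty$ is at positive distance from $\partial\Omega_0$.

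Finally, smooth convergence forces $S_j$ into $\mathrm{int}(\Omega_{t_j})$ for large $j$, which is a contradiction. Here strict mean convexity of $\partial\Omega_0$ (used in the theorem) helps by providing a barrier: near $\Sigma_t$ the boundary remains strictly mean convex, so the $S_j$ cannot touch it tangentially from inside. Strict stability of $\Sigma_0$ is used only through \cite{AMS1} for the smoothness of $\mathcal{H}$, which is needed to build $\tau$.

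The delicate points to check are whether the index bound is needed for compactness, or whether it can be avoided by working directly with outermost minimal enclosures as in the proof of Theorem \ref{comp}, and the uniformity of the AF control so that no minimal surfaces escape to infinity.
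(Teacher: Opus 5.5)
Your proposal follows essentially the same route as the paper: a backward vacuum evolution, an extension of $\Omega_t$ outward through the development to rejoin $M_0$, and a contradiction argument in which minimal surfaces exiting $\Omega_{t_j}$ subconverge to a closed minimal surface in $M_0$ that violates (5*). The paper settles the two points you hedge on as follows. First, it evolves only a collar $V$ of $\partial\Omega_0$, so that $M_t$ coincides with $M_0$ outside a compact set; this makes asymptotic flatness, the barrier at infinity and equality of masses automatic, and no global time function is needed. Second, it replaces $S_j$ by the outermost minimal surface of \cite[Lemma 4.1 (i)]{HI}, which is stable (so Schoen's curvature estimates \cite{Schoen} apply) and outer-minimizing (which gives the area bound), whereas your other alternatives---an index bound, or curvature estimates without stability---are not available for arbitrary $S_j$.
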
 
\begin{proof}
Using local existence of the Cauchy problem for the vacuum Einstein equations we can 
 evolve a collar neighborhood $(V,\tilde{g}_0,\tilde{k}_0)$ of $\partial\Omega$ backwards in time to obtain a portion of a vacuum spacetime. Since the MOTT is spacelike, we may then smoothly extend $\Omega_t$ into spacetime to obtain an extension $(M_t,\tilde{g}_t,\tilde{k}_t)$, which first passes through the complement of the MOTT domain within $D^{-}(\Omega_0)$ and then through $D^{-}(V)$ to eventually coincide with the unbounded component of $(M_0\setminus V,\tilde{g}_0,\tilde{k}_0)$. This construction of $M_t$ is illustrated schematically in Figure~\ref{Fig3}. As these  extensions share a single asymptotically flat end they have the same  ADM mass. \\

\item[] {
    \begin{center}
    \includegraphics[width=0.8\textwidth]{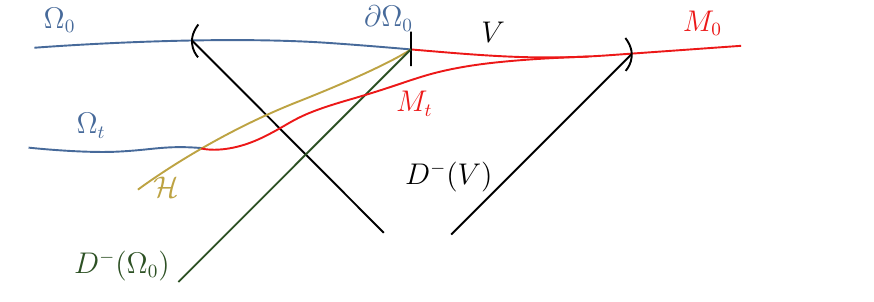}
    
    \captionof{figure}{Schematic view of the extension $M_t$.}
    \label{Fig3}
    \end{center}
\vspace{2ex}
}

 It is clear that each $M_t$ satisfies conditions (1)-(4) in  definition \ref{ae}.  It remains to verify condition (5*) which will proceed by contradiction. Assume that there is no positive $\varepsilon_1$ such that (5*) holds for all $t\in[-\varepsilon_1,0]$. There is then a sequence of times $t_i\rightarrow 0$, such that $(M_{t_i},g_{t_i})$ contains a closed minimal surface that has nontrivial intersection with $M_{t_i} \setminus\mathrm{int}(\Omega_{t_i})$. It follows that the outermost minimal surface $\mathbf{\Sigma}_{t_i}\subset M_{t_i}$, which consists of stable embedded minimal 2-spheres \cite[Lemma 4.1 (i)]{HI}, also has nontrivial intersection with $M_{t_i} \setminus\mathrm{int}(\Omega_{t_i})$. Classical curvature estimates \cite[Theorem 3]{Schoen} for stable minimal hypersurfaces, combined with a uniform area bound arising from the outerminimizing property of outermost minimal surfaces as well as the uniformly controlled ambient geometry of $M_{t_i}$, imply that $\mathbf{\Sigma}_{t_i}$ subconverges smoothly to a closed stable minimal surface $\mathbf{\Sigma}_0 \subset M_0$. Moreover, this limit surface must have nontrivial intersection with $M_{0} \setminus\mathrm{int}(\Omega_{0})$, yielding a contradiction to strict admissibility of the extension $M_0$. We conclude that an $\varepsilon_1 >0$ exists for which $\{(M_t,g_t,k_t)\}_{t\in[-\varepsilon_1 ,0]}$ are strictly admissible extensions.
 \end{proof}

This seems close  to the required monotonicity proof in vacuum. However,  
the value $\varepsilon_1$ is potentially dependent on the chosen extension at time $t=0$
and might shrink to zero upon minimization over all smooth extensions.
This spoils the monotonicity argument.
While we were not able to fix this problem within the vacuum class, we
succeeded via  a substantial detour using `auxiliary'
Einstein-Vlasov matter, which we  sketch before giving the proofs.

\subsection{Proofs of Theorems 3 and 4}

Within the setting of Theorem 3, we first note that this result requires strict mean convexity of $\partial \Omega_0$, which is not required in Proposition \ref{prop}.
In fact the key step 5 of the following proof requires strict mean convexity for all $\partial \Omega_t$, which follows from
mean convexity of $\partial\Om_0$  by shrinking the initial interval $[-\varepsilon, 0)$ if necessary. As a lession from the previous subsection, we avoid any further shrinking of the time interval in the following process  (steps 4-6) which consists of  modifying the extensions $(M_t,g_t,k_t)$ so that they become strictly admissible for all $t\in[-\varepsilon,0]$. 
This comes at the expense of leaving the vacuum class and modeling the given energy-momentum tensor by suitable Einstein-Vlasov particles which we do in step 2.
Proposition \ref{prop}, formulated for vacuum, carries over to Einstein-Vlasov  straightforwardly. 
However, the modeling itself requires a prior deformation of the given data to ones satisfying the strict dominant energy condition (step 1) while preserving strict admissibility. The latter property is shown by a compactness argument analogous to Proposition \ref{prop}.
The key step 5 then requires two further perturbations (steps 3 and 4). All deformations carried out below induce none or 
only small changes of the ADM mass which in particular do not affect the outcome of the final 
minimization  (step 7).

\begin{proof} [Proof of Theorem 3] Let $(M_0,g_0,k_0)$ be a strictly admissible extension of the compact initial data set $(\Omega_0,\mathrm{g}_0,\mathrm{k}_0)$ whose boundary components are strictly stable and strictly mean convex MOTS. Let $\{(\Omega_t,\mathrm{g}_t,\mathrm{k}_t)\}_{t\in[-\varepsilon,0]}$ be a foliation by compact spacelike hypersurfaces of a spacelike MOTT domain contained within the past domain of dependence $D^-(\Om_0)$.  By shrinking $\varepsilon$ if necessary, we may assume without loss of generality that $\partial\Omega_t$ is strictly mean convex for $t\in[-\varepsilon,0]$. 
\begin{enumerate}
\item By Lemma \ref{tech} (Appendix A) there exists a perturbation of $(g_0,k_0)$ on $M_0 \setminus\Omega_0$ to strict dominant energy condition, while disturbing the ADM mass only slightly. Denote this new extension by $(M_0,\tilde{g}_0,\tilde{k}_0)$, and note that for sufficiently small perturbation parameter this is also a strictly  admissible extension. Indeed, only the strict admissibility condition (5*) requires some justification, and this is provided by the same type of compactness argument 
as in Proposition \ref{prop}. 

\item 
Now apply Lemma \ref{thm:EV-collar-interface} (Appendix B) to evolve a collar neighborhood $(V,\tilde{g}_0,\tilde{k}_0)$ of $\partial\Omega$ backwards in time to obtain a portion of a spacetime satisfying the dominant energy condition. Recalling the
construction of Proposition \ref{prop} (cf Fig. \ref{Fig3}),  we smoothly extend $\Omega_t$ into spacetime to obtain an extension $(M_t,\tilde{g}_t,\tilde{k}_t)$, which first passes through the compliment of the MOTT domain within $D^{-}(\Omega_0)$ and then through $D^{-}(V)$ to eventually coincide with the unbounded component of $(M_0\setminus V,\tilde{g}_0,\tilde{k}_0)$. By abuse of notation going forward, we will denote $(M_t,\tilde{g}_t,\tilde{k}_t)$ by $(M_t,g_t,k_t)$.
Note that these extensions $(M_t,g_t,k_t)$ are strictly admissible, but only on the subinterval $[0,\varepsilon_1]$ by Proposition \ref{prop}.

\item We now perform another perturbation $\delta(t)$ with the goal of preserving the strict dominant energy condition of the data  $(M_t,g_{\delta(t)},k_{\delta(t)})$ on the whole interval
$[-\varepsilon,0]$.  We take $\delta(t)$ to be a smooth function on $[-\varepsilon,0]$ which vanishes at $t=0$ and is strictly positive for $t<0$; according to Lemma \ref{tech}, for each $t<0$ there exists a perturbation of the unbounded component $M_t^\infty$ of $M_t \setminus\Omega_t$ that yields a new asymptotically flat initial data set $(M_t,g_{\delta(t)},k_{\delta(t)})$ satisfying: the strict dominant energy condition holds at all points of $M_t^\infty$, and
\begin{equation}\label{mass}
|m_{ADM}(M_t,g_{\delta(t)},k_{\delta(t)})-m_{ADM}(M_t,g_t,k_t)|<\delta(t).
\end{equation}

\item A final perturbation will now turn a certain class of closed, stable minimal surfaces into strictly stable ones. This perturbation concerns $U_t$, the open subset of $M_t^\infty$ obtained by removing a large coordinate sphere in the asymptotic end and taking the bounded component. By applying White's bumpy metric theorem \cite[Theorem 3.1]{White}, we can find a nearby smooth metric $g'_{\delta(t)}$ which agrees with $g_{\delta(t)}$ outside of $U_t$, and has the property that any closed immersed stable minimal surface in $(M_t,g'_{\delta(t)})$ which intersects $U_t$ is strictly stable. Since the deformation carried out in step 4 guarantees that the data $(M_t,g_{\delta(t)},k_{\delta(t)})$ satisfy the strict dominant energy condition, so do  $(M_t,g'_{\delta(t)},k'_{\delta(t)})$ (where $k'_{\delta(t)}=k_{\delta(t)}$) when the present perturbation is sufficiently is close to $g_{\delta(t)}$. Moreover, the mass is unchanged and therefore \eqref{mass} holds for this perturbation.

\item Recalling Definition \ref{ae} of  \textit{admissibility}
we now claim that the set of times $I\subset [-\varepsilon,0]$ such that $(M_t,g'_{\delta(t)},k'_{\delta(t)})$ is an admissible extension of $(\Omega_t,\mathrm{g}_t,\mathrm{k}_t)$, is the entire interval $[-\varepsilon,0]$. First, observe that conditions (1)-(4) are immediately satisfied, and thus it remains to verify  (5). The case $t=0$ is trivial since $g'_{\delta(0)}=g_0$ and $k'_{\delta(0)}=k_0$, so that $I$ is not empty. Next, notice that the arguments at the beginning of this section show that $I$ is open. More precisely, if this is not the case then there exists $t_{*}\in I$ and a sequence $t_j \rightarrow t_*$ with $t_j \notin I$ for each $j$. This yields a sequence of stable minimal surfaces, each element of which lies in $(M_{t_j},g'_{\delta(t_j)})$ and exits $\Omega_{t_j}$. These must then subconverge to a stable minimal surface of $(M_{t_*},g'_{\delta(t_*)})$, that lies within $\Omega_{t_*}$ because $t_{*}\in I$. On the other hand, since the $j$th member of the sequence of minimal surfaces intersects $M_{t_j}\setminus\Omega_{t_j}$, the limit minimal surface must have nontrivial intersection with $\partial\Omega_{t_*}$. However, this is impossible by the maximum principle as $\partial\Omega_{t_*}$ is strictly mean convex. To establish closedness, consider a sequence $t_j\in I$ with $t_j \rightarrow t_0 \geq-\varepsilon$. If $t_0 \notin I$, then there exists a closed minimal surface of $(M_{t_0},g'_{\delta(t_0)})$ that exits $\Omega_{t_0}$. It follows that the outermost minimal surface $\mathbf{\Sigma}_{t_0}$ must intersect $U_{t_0}$ as it cannot enter the part of the asymptotic end foliated by mean convex coordinate spheres. Thus, $\mathbf{\Sigma}_{t_0}$ is strictly stable. The implicit function theorem, based at $\mathbf{\Sigma}_{t_0}$, will then yield a closed minimal surface in $(M_{t_j},g'_{\delta(t_j)})$ for all $j$ sufficiently large, which has nontrivial intersection with $M_{t_j}\setminus\Omega_{t_j}$. This contradicts the assumption that $t_j \in I$, so that in fact $t_0 \in I$ and $I$ is closed, establishing the claim. 

\item As final step in the preparation  of the data we observe that the admissible extensions $\{(M_t,g'_{\delta(t)},k'_{\delta(t)})\}_{t\in[-\varepsilon,0]}$ constructed above are in fact strictly admissible. To see this, assume that one of them, $M_t$, is not strictly admissible. Then there exists a closed minimal surface in $M_{t}$ that leaves $\mathrm{int}(\Omega_t)$. However, by admissibility this minimal surface must be contained within $\Omega_t$, and hence it has a nontrivial intersection with $\partial\Omega_{t}$. This contradicts the maximum principle, and establishes the desired property.

\item To address monotonicity of the Bartnik mass, let $t_2\in[-\varepsilon,0]$, and choose a function $\delta_2(t)$ that is smooth on $[-\varepsilon,t_2]$ with the property that it vanishes at $t_2$ and is strictly positive for $t<t_2$. Then applying the above constructions with $t_2$ playing the role of `starting time', for any starting strictly admissible extension $(M_{t_2},g_{t_2},k_{t_2})$ we obtain strictly admissible extensions $(M_{t_1},g'_{\delta_2(t_1)},k'_{\delta_2(t_1)})$ for each $t_1\in[-\varepsilon,t_2]$ such that 
\begin{equation}\label{mass1}
|m_{ADM}(M_{t_1},g'_{\delta_2(t_1)},k'_{\delta_2(t_1)})-m_{ADM}(M_{t_2},g_{t_2},k_{t_2})|< \delta_2(t_2).
\end{equation}
Thus, we have shown that for any $t_1\in[-\varepsilon, t_2]$ there exists a strictly admissible extension $M_{t_1}$ having an asymptotic end whose ADM mass is arbitrarily close to that of $M_{t_2}$. It follows immediately that the Bartnik masses satisfy $m(\Omega_{t_1})\leq m(\Omega_{t_2})$.
\end{enumerate}

\end{proof}

\begin{proof}[Proof of Theorem 4]
  Theorem \ref{monnoncomp} can now be established in a similar manner with straightforward modifications.
  \end{proof}

\begin{remark}
This remark expands the discussion at the beginning of Sect.  \ref{mono} where we emphasized the difference between  temporal monotonicity on the one hand, and monotonicity of nested domains within a $t=const = 0$ slice $M_0$ on the other hand. However, these issues are related when we consider the limit that ${\mathcal H}$ 
approaches $M_0$ in the former setting. Then in particular Theorem \ref{monnoncomp} continues to hold and guarantees monotonicity of nested domains within $\Omega_0$ having boundaries sufficiently close to $\partial\Omega_0$, as  the requirement of strict mean convexity of $\partial \Om_0$ excludes adjacent minimal surfaces.
\end{remark}

\begin{remark} 
\label{jumps}
This final comment concerns a possible extension of Theorem \ref{moncomp}
 to the case where the outermost MOTT  $\mathcal{H} = \{\partial \Om\}$ 
is not smooth anymore. In a spacetime foliated by spacelike slices where some initial
slice $M_0$ contains a MOTS, the future propagation will generically 
reveal `jumps' of the outermost MOTT. A systematic investigation of this behavior 
has been carried out in  \cite{AMMS}. Regarding monotonicity of $m(\Omega)$ along such a jump,
it holds provided minimal surfaces are absent on $M_t$ (since any admissible extension of the target MOTS of the jump  yields an admissible extension of the origin). In the generic case, however, we expect the  monotonicity problem to be very intricate.
\end{remark}

\appendix
\section{Appendix}
\label{appA}

In the proof of Theorems \ref{moncomp} and \ref{monnoncomp}, a technical perturbation to strict dominant energy condition
is utilized, in which the ADM mass is also only slightly disturbed. It follows from a combination of existing perturbation results, and is recorded here for completeness.

\begin{lemma}\label{tech}
Let $(M,g,k)$ be an asymptotically flat initial data set with one end and compact boundary, satisfying the dominant energy condition. For any $\delta>0$, there exists a new asymptotically flat initial data set $(M,\tilde{g},\tilde{k})$ admitting the following properties. 
\begin{enumerate}
\item $(\tilde{g},\tilde{k})$ is $\delta$-close to $(g,k)$ in weighted H\"{o}lder space involving at least two derivatives. In particular, $(\tilde{g},\tilde{k})$ agrees with $(g,k)$ up to second order at the boundary of $M$.

\item The strict dominant energy condition holds globally for $(\mathrm{int}(M),\tilde{g},\tilde{k})$.

\item The relation between the two ADM masses is given by
\begin{equation}\label{mass1}
|m_{ADM}(M,\tilde{g},\tilde{k})-m_{ADM}(M,g,k)|<\delta.
\end{equation}
\end{enumerate}
\end{lemma}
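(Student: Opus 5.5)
The plan is to combine two existing perturbation results: a density theorem for the dominant energy condition, and a conformal/constraint-solving deformation that turns the nonstrict inequality into a strict one while keeping the data near the boundary fixed. The model is the Eichmair--Huang--Lee--Schoen density theorem for the spacetime positive mass theorem. It shows that AF data satisfying the dominant energy condition can be approximated, in weighted Hölder norms, by data with harmonic asymptotics satisfying the \emph{strict} dominant energy condition, with ADM energy-momentum changing by an arbitrarily small amount. We need a version that fixes the data to second order at $\partial M$. Equivalently, the perturbation should be supported in the interior and decay toward the boundary.

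\textbf{Step 1: reduce to an interior problem.} First we choose a smooth cutoff $\chi\ge 0$ on $M$ that vanishes to infinite order at $\partial M$ and is positive in $\mathrm{int}(M)$. We then look for deformations of the form
\[
\tilde g = g + h, \qquad \tilde k = k + \ell,
\]
where $(h,\ell)$ solve a modified constraint system of the Corvino--Schoen type. Concretely, we prescribe
\[
\tilde\mu - |\tilde J|_{\tilde g} \ \ge\ \mu-|J|_g + \eta\chi\,\rho ,
\]
with $\rho>0$ a weight decaying like $|x|^{-3-q}$, say, so that it is integrable and keeps the AF decay rates in (1). The deformation $(h,\ell)$ is obtained by the implicit function theorem applied to $\Phi(g,k)=(16\pi\mu,8\pi J)$, composed with the formal adjoint $D\Phi^*$ and weighted by $\chi$: we set $(h,\ell)=\chi\, D\Phi^*_{(g,k)}(f,X)$. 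This is the weighted Corvino--Schoen scheme, in which the weight $\chi$ forces the solution to vanish to high order at the boundary. We solve for $(f,X)$ in weighted spaces on all of $\mathrm{int}(M)$.

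\textbf{Step 2: justify solvability.} Solvability requires that $\chi D\Phi\,D\Phi^*$ be an isomorphism between suitable weighted spaces. At infinity this follows from AF decay; this is the EHLS argument, where the kernel is trivial because there are no nonzero decaying KIDs. Near the boundary it uses the degenerate weight, as in Corvino's gluing with weights $\rho^{-1}\sim e^{-1/d}$. I expect this to be the main obstacle: we need injectivity of $D\Phi^*$ on the whole noncompact manifold with the degenerating boundary weight. The argument is that a kernel element $(f,X)$ would be a KID that decays at infinity, and a decaying KID vanishes identically by unique continuation. The resulting solution is small, of size $O(\eta)$, in the weighted Hölder space, and vanishes to second order at $\partial M$. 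This gives item 1.

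\textbf{Step 3: strictness and mass control.} Since the right-hand side strictly exceeds $\mu-|J|\ge0$ wherever $\chi\rho>0$, the strict dominant energy condition holds in $\mathrm{int}(M)$, which is item 2. For item 3, the ADM energy and momentum are continuous functions on the weighted Hölder space with $q>1/2$, by the Bartnik--Chruściel continuity results, given that $\mu$ and $J$ remain integrable. Hence $|\tilde E - E|$ and $|\tilde P-P|$ are $O(\eta)$. Since the Lorentzian length $\sqrt{E^2-|P|^2}$ is continuous (and $E\ge|P|$ by the positive mass theorem), we obtain
\[
|m_{ADM}(\tilde g,\tilde k)-m_{ADM}(g,k)|<\delta
\]
for $\eta$ small. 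If $E=|P|$, the continuity of the square root still suffices, even though it is not Lipschitz there.
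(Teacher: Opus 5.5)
Your proposal has a genuine gap, and it sits where you expected the main obstacle. In Step~2, triviality of the kernel of $D\Phi^*$ on decaying $(f,X)$ gives only \emph{injectivity} of $D\Phi\,\chi\,D\Phi^*$. For the source you need, \emph{surjectivity} fails at infinity.

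Take $M=\mathbb{R}^3\setminus B$ with $g=\delta$ and $k=0$, which satisfies the hypotheses of the lemma. The metric component of $D\Phi^*(f,X)$ is $\nabla^2 f-(\Delta f)\delta$, so for decaying $f$ (and bounded $\chi$) the deformation $h=\chi D\Phi^*(f,X)$ decays faster than $|x|^{-2}$. Now $(1,0)$ is a global KID. Pairing the energy component of the linearized equation, $DR(h)=16\pi\eta\chi\rho$, against it gives
\[
16\pi\eta\int_M\chi\rho\,dx=\lim_{r\to\infty}\int_{S_r}(\partial_j h_{ij}-\partial_i\mathrm{tr}\,h)\nu^i\,dA-\int_{\partial B}(\partial_j h_{ij}-\partial_i\mathrm{tr}\,h)\nu^i\,dA=0,
\]
because $h$ vanishes to second order on $\partial B$ and decays too fast at infinity. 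This is a contradiction. The positive mass theorem gives the same conclusion nonlinearly: filling in the flat ball would produce data on $\mathbb{R}^3$ with the dominant energy condition, strict on an open set, and zero ADM energy.

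So making the DEC strict all the way out forces a first-order change of the ADM energy, and $h$ must carry an $|x|^{-1}$ tail, which your ansatz with decaying $(f,X)$ cannot produce. Injectivity does not give an isomorphism here. There is an obstruction at infinity tied to the asymptotic Killing fields, and it has to be absorbed by extra asymptotic parameters, as in Corvino--Schoen's gluing where the Kerr parameters are varied. This is exactly what the paper brings in. It first uses Corvino--Huang to glue to Kerr--Newman data outside $M_{2r}$; the charge gives strict DEC near infinity for free, and the mass change is controlled by taking $r$ large. Only then does it apply a deformation compactly supported in $M_{3r}$, which leaves the asymptotics untouched.

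There is a second gap in Step~1. The implicit function theorem for the standard map $\Phi$ cannot deliver an \emph{inequality} for $\tilde\mu-|\tilde J|_{\tilde g}$. Solving $\Phi(\tilde g,\tilde k)=\Phi(g,k)+(16\pi\eta\chi\rho,0)$ keeps $\tilde J=J$, but $|J|_{\tilde g}-|J|_g$ is, to leading order, linear in $h=\chi D\Phi^*(f,X)$, with size comparable to $|h|\,|J|_g$. This term is of the same order $\eta$ as your gain $\eta\chi\rho$ and carries the same factor $\chi$, so nothing makes it subordinate. Wherever $\mu=|J|\neq0$, even the nonstrict DEC can be lost.

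The remedy is the Corvino--Huang modified constraint operator, which cancels this first-order variation. Its linearized adjoint is no longer the KID operator, so your injectivity argument would have to be redone for it. On bounded domains its kernel can be nontrivial; for flat data it consists of the Minkowski KIDs. This is what the deficit $-\varepsilon\mathbf{1}_B$ in the Huang--Lee deformation theorem pays for, and the paper absorbs that deficit by placing $B$ in the Kerr--Newman collar $M_{3r}\setminus M_{2r}$, where $\mu_r-|J_r|\ge c>0$.

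Your Step~3 (strictness plus continuity of the ADM energy--momentum) would be fine once a suitable deformation exists, but Steps~1--2 as written do not produce one.
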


\begin{proof}
Choose a large coordinate sphere $S_r$ in the asymptotic end, and denote the bounded component of $M\setminus S_r$ by $M_r$. By 
\cite[Theorem 1.4]{CorvinoHuang} there exists a nearby new asymptotically flat initial data set $(M,g_r,k_r)$ which agrees with $(M,g,k)$ on $M_r$, and agrees with Kerr-Newman initial data on $M\setminus M_{2r}$. Moreover, the mass of the new data may be made as close to the original as desired by taking $r$ sufficiently large. Choosing the Kerr-Newman data to have nonvanishing electromagnetic fields then yields a strictly dominant energy condition for the new data on $M\setminus M_{2r}$. Next, we deform $(M,g_r,k_r)$ by using a slight generalization of the statement in \cite[Theorem 8]{HuangLee}, to obtain the desired data set $(M,\tilde{g},\tilde{k})$. In particular, $(\tilde{g},\tilde{k})$ agree with $(g_r,k_r)$ on $M\setminus M_{3r}$ and admit a strict dominant energy condition on $M_{3r}$. This is achieved by applying \cite[Theorem 8]{HuangLee} on the compact closure $\bar{M}_{3r}$ and choosing a small ball $B\subset M_{3r}\setminus M_{2r}$ to obtain: given $\varepsilon>0$ we may choose a sufficiently small positive function $u\in C^{0}_c(\mathrm{int}(M_{3r}))$ such that
\begin{equation}
\tilde{\mu}-|\tilde{J}|\geq (\mu_r-|J_r|) +u -\varepsilon\mathbf{1}_{B}\quad\quad\text{ on }M_{3r},
\end{equation}
where $\mathbf{1}_{B}$ is the indicator function for $B$, and $(\tilde{\mu},\tilde{J})$, $(\mu_r,J_r)$ are the energy-momentum density of matter fields for $(M,\tilde{g},\tilde{k})$, $(M,g_r,k_r)$ respectively. Since $\mu_r -|J_r|\geq c>0$ for some constant $c$ on $M_{3r}\setminus M_{2r}$, by taking $\varepsilon$ small enough it follows that $\tilde{\mu}-|\tilde{J}|>0$ on $M$. We note that the published statement of \cite[Theorem 8]{HuangLee} allows for a positive $u$ only on any $V\Subset M_{3r}$. However, this is based on \cite[Theorem 3.3]{HuangLee}, whose proof in turn is essentially the same as 
\cite[Theorem 3.1]{CorvinoHuang}. 
This latter result is stated using certain weighted H\"{o}lder space instead of compactly contained domains. Therefore, a slight generalization of \cite[Theorem 8]{HuangLee} utilizing the weighted H\"{o}lder spaces is also valid, and this is what we have employed here. In fact, Huang-Lee make reference to such a generalization for \cite[Theorem 3.3]{HuangLee} in the first paragraph of its proof.
\end{proof}

\section{Appendix}
\label{appB}

In this section we will show how to realize certain types of initial data in the context of
Vlasov matter, with the purpose of embedding portions of initial data satisfying the dominant
energy condition into a Lorentzian manifold satisfying the spacetime dominant energy condition.
It should be noted that Gl\"{o}ckle \cite[Theorem 4]{Glockle} has shown that there exist smooth DEC initial
data sets which do not arise as the induced metric and second fundamental from for a spacelike slice
within a smooth spacetime satisfying the \textit{spacetime dominant energy condition}, namely that $T(X,Y)\geq 0$
for any two future causal vectors $X$, $Y$ where $T$ is the stress-energy tensor. We begin with two preliminary propositions.

\begin{prop}[Strict DEC realization by smooth Vlasov data]\label{lem:strict-DEC-realization}
Let $(M,g,k)$ be an initial data set satisfying the strict dominant energy condition $\mu>|J|$ on a
compact subset $\mathcal{K}\subset M$. Then there exists a nonnegative (Vlasov distribution) function
$f_0\in C^\infty(TM|_\mathcal{K})$
such that
\begin{enumerate}
\item $f_0(x,\cdot)$ is compactly supported in each fiber $T_x\mathcal{K}$;
\item for every $x\in \mathcal{K}$ the following identities hold
\begin{equation}
\int_{T_xM} \!\!\!f_0(x,v)\,\sqrt{1+|v|_g^2}\,dv_{g_x}=\mu(x),\quad
\int_{T_xM} \!\!\!f_0(x,v)\,v\,dv_{g_x}=J(x).
\end{equation}
\end{enumerate}
\end{prop}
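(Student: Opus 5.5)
The plan is to build $f_0(x,\cdot)$ at each point as a rescaled, translated bump in the fiber $T_xM$, choosing the translation $w(x)$ and weight $b(x)$ so that the two moment identities hold exactly. The ansatz is motivated by a single ``particle beam'' of velocity $w$ and density $a$. Such a beam has energy density $a\sqrt{1+|w|^2}$ and momentum density $a w$, so the choice
\[
a=\sqrt{\mu^2-|J|^2},\qquad w=\frac{J}{\sqrt{\mu^2-|J|^2}}
\]
reproduces $(\mu,J)$ exactly. These depend smoothly on $x$ precisely because $\mu>|J|$. By compactness of $\mathcal K$ and continuity, $\mu-|J|\geq c>0$ on a neighborhood of $\mathcal K$, so $|w|\le W$ uniformly. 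The beam is a delta measure, so we smear it out.

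Fix a smooth, radial, nonnegative profile $\psi_r(u)=r^{-3}\psi(|u|_g/r)$ on $T_xM$, with unit mass and support in the ball of radius $r$. The radial dependence uses $|\cdot|_g$, so this is intrinsic and smooth in $x$. We set
\[
f_0(x,v)=b(x)\,\psi_r\bigl(v-w(x)\bigr),
\]
which is nonnegative and compactly supported in each fiber. Since $\psi_r$ is even, the momentum moment is exactly $b\,w$. The energy moment is $b\,E_r(w)$, where
\[
E_r(w)=\int\psi_r(u)\sqrt{1+|u+w|^2}\,du .
\]
The conditions therefore become
\[
b\,w=J,\qquad b\,E_r(w)=\mu .
\]
Writing $w=s\,e$ with $e$ a unit vector parallel to $J$, these are equivalent to the scalar equation
\[
\frac{s}{E_r(se)}=\frac{|J|}{\mu}\in[0,1-\eta],
\]
together with $b=\mu/E_r(w)$. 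The case $J=0$ gives $s=0$ and $b=\mu/E_r(0)$ directly.

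The main step is to solve this scalar equation smoothly in $x$. As $r\to 0$, $E_r(w)\to\sqrt{1+|w|^2}$ in $C^k$ on $\{|w|\le 2W\}$. In that limit the map $F_0(w)=w/\sqrt{1+|w|^2}$ is a diffeomorphism from $\mathbb R^3$ onto the open unit ball, with invertible differential on compacts. I would therefore avoid any monotonicity computation for $s/E_r$, which is not obvious. Instead I would apply the inverse function theorem in its quantitative form. The maps
\[
F_r(w)=\frac{w}{E_r(w)}
\]
are $C^1$-close to $F_0$ on $\{|w|\le 2W\}$, so for $r$ small enough $F_r$ is a diffeomorphism onto a set containing the closed ball of radius $1-\eta$. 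We then define
\[
w(x)=F_r^{-1}\bigl(J(x)/\mu(x)\bigr),
\]
which is well defined since $\mu>0$, and smooth in $x$ because $F_r$ depends only on $|\cdot|_g$ and hence commutes with orthonormal frame changes. Finally $b=\mu/E_r(w)>0$ is smooth.

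The only genuine obstacle is this uniform invertibility. Its key inputs are the strict dominant energy condition on the compact set $\mathcal K$, which bounds $|J|/\mu$ away from $1$, and the smallness of $r$. With these, all identities in Proposition~\ref{lem:strict-DEC-realization} hold exactly and $f_0\in C^\infty(TM|_{\mathcal K})$.
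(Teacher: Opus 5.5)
Your argument is correct, but it takes a different route from the paper's.

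\textbf{The paper's construction.} The paper never inverts a map. It uses a centered radial bump $\eta$, with unit mass and zero first moment, together with a translated copy $\eta_q$, whose first moment is exactly $q$. It then sets $f_0=a(x)\eta+c^{-1}\eta_{cJ(x)}$. The low-mass fast beam $c^{-1}\eta_{cJ}$ carries exactly the momentum $J$, and the isotropic part supplies the remaining energy. Nonnegativity $a\ge 0$ follows from the crude bound $c^{-1}\mathcal{E}(x,cJ)\le |J|+c^{-1}\sqrt{1+r_0^2}<\mu$, once $c>\sqrt{1+r_0^2}/\min_{\mathcal K}(\mu-|J|)$. So the two moment equations decouple, and the construction is fully explicit, with no inverse function theorem and no small-width limit.

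\textbf{Your construction.} Your single boosted bump couples the two equations, which forces you to invert $F_r(w)=w/E_r(w)$. In exchange, your $f_0=\mu\,\Phi(J/\mu,\cdot)$ depends on the data only through $\mu$ and the ratio $J/\mu$, and it is positively homogeneous of degree one in $(\mu,J)$. The paper's beam term, by contrast, has a fixed mass $1/c$ that must be tuned to the size of the DEC gap.

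\textbf{The monotonicity you sidestepped is easy.} Set $e(s)=E_r(se)$. It is strictly convex, being a convolution of the strictly convex function $\sqrt{1+|w|^2}$ with $\psi_r\ge 0$. Hence $g(s)=e(s)-se'(s)$ satisfies $g'(s)=-s\,e''(s)<0$ for $s>0$. Moreover $g(s)=\int\psi_r(u)\,\frac{1+|u|^2+s\,u\cdot e}{\sqrt{1+|u+se|^2}}\,du$. On $\operatorname{supp}\psi_r$ the integrand converges uniformly to $\psi_r(u)\,u\cdot e$, which integrates to zero because $\psi_r$ is even, so $g(s)\to 0$ as $s\to\infty$. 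Therefore $g>0$ and $\frac{d}{ds}\frac{s}{e(s)}=g/e^2>0$. Thus $F_r$ is a diffeomorphism of $\mathbb{R}^3$ onto the open unit ball for every $r>0$, not only for small $r$, and your perturbative inverse-function step can be dropped.
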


\begin{proof}
Since $\mathcal{K}$ is compact we find $\sigma:=\min_{\mathcal{K}}(\mu-|J|)>0$.
Choose a nonnegative smooth function $\chi\in C_c^\infty([0,\infty))$,
supported in $[0,r_0^2]$ for some $r_0>0$, and normalize it so that if
$\eta(x,v):=\chi(|v|_g^2)$ then
\begin{equation}
\int_{T_xM}\eta(x,v)\,dv_{g_x}=1
\qquad\text{for all }x\in \mathcal{K}.
\end{equation}
Because $\eta(x,\cdot)$ is radial in each fiber, one also has
\begin{equation}
\int_{T_xM}\eta(x,v)\,v\,dv_{g_x}=0
\qquad\text{for all }x\in \mathcal{K}.
\end{equation}
Next, for $q\in T_xM$, define the translated bump
$\eta_q(x,v):=\eta(x,v-q)$
and observe that
\begin{equation}
\int_{T_xM}\eta_q(x,v)\,dv_{g_x}=1,
\qquad
\int_{T_xM}\eta_q(x,v)\,v\,dv_{g_x}=q.
\end{equation}
Since $\eta_q(x,\cdot)$ is supported in the $g_x$-ball $B_{r_0}(q)\subset T_xM$, we have
\begin{equation}
\sqrt{1+|v|_g^2}\le |q|_g+\sqrt{1+R^2}
\qquad\text{on }\operatorname{supp}\eta_q(x,\cdot),
\end{equation}
and therefore
\begin{equation}
\mathcal{E}(x,q):=\int_{T_xM}\eta_q(x,v)\,\sqrt{1+|v|_g^2}\,dv_{g_x}\le |q|_g+\sqrt{1+r_0^2}.
\end{equation}
By choosing $c>0$ so large that $c^{-1}\sqrt{1+r_0^2}<\sigma$, it follows that
for every $x\in \mathcal{K}$ we have
\begin{equation}
\frac1c\,\mathcal{E}(x,cJ(x))
\le |J(x)|+c^{-1}\sqrt{1+r_0^2}
< |J(x)|+\sigma
\le \mu(x).
\end{equation}
Hence the function
\begin{equation}
a(x):=\frac{\mu(x)-c^{-1}\mathcal{E}(x,cJ(x))}{e_0},\quad\quad e_0:=\int_{T_xM}\eta(x,v)\,\sqrt{1+|v|_g^2}\,dv_{g_x},
\end{equation}
is smooth and nonnegative on $\mathcal{K}$; note that $e_0$ is independent of $x$.
We may now set
\begin{equation}
f_0(x,v):=a(x)\,\eta(x,v)+\frac1c\,\eta_{cJ(x)}(x,v).
\end{equation}
Then $f_0\ge 0$, and for each $x\in \mathcal{K}$ the function $f_0(x,\cdot)$ is compactly
supported in $T_xM$. To compute the moments, observe that
\begin{equation}
\int_{T_xM} \!\!\!f_0(x,v)\,v\,dv_{g_x}
=
a(x)\int_{T_xM}\!\!\!\eta(x,v)\,v\,dv_{g_x}
+\frac1c\int_{T_xM}\!\!\!\eta_{cJ(x)}(x,v)\,v\,dv_{g_x}.
\end{equation}
so that according to the identities above
\begin{equation}
\int_{T_xM} f_0(x,v)\,v\,dv_{g_x}
=
0+\frac1c\,(cJ(x))
=
J(x).
\end{equation}
Moreover
\begin{equation}
\int_{T_xM} f_0(x,v)\,\sqrt{1+|v|_g^2}\,dv_{g_x}
=
a(x)e_0+\frac1c\,\mathcal{E}(x,cJ(x))
=
\mu(x).
\end{equation}
Thus $f_0$ is a smooth nonnegative Vlasov datum on $TM|_\mathcal{K}$, with the prescribed
energy and momentum moments.
\end{proof}

\begin{prop}[Vlasov realization up to a flat vacuum boundary]\label{lem:flat-boundary-vlasov}
Let $(M,g,k)$ be an initial data set, and let $\mathcal{K}\subset M$ be a compact domain with smooth boundary. Assume that
$\mu>|J|$ on $\mathrm{int}(\mathcal{K})$, and that both $\mu$ and $J$ vanish to infinite order along a collection of boundary components $\partial' \mathcal{K}\subset\partial\mathcal{K}$.
Then there exists a nonnegative function $f_0\in C^\infty\!\big(TM|_{\mathcal{K}}\big)$
such that
\begin{enumerate}
\item $f_0$ is compactly supported in each fiber $T_xM$;
\item $f_0$ vanishes to infinite order along $T(\partial' \mathcal{K})$, hence extends by zero to a smooth nonnegative function on $TM$;
\item for every $x\in \mathcal{K}$ the following identities hold
\begin{equation}
\int_{T_xM} \!\!\!f_0(x,v)\,\sqrt{1+|v|_g^2}\,dv_{g_x}=\mu(x),\quad
\int_{T_xM} \!\!\!f_0(x,v)\,v\,dv_{g_x}=J(x).
\end{equation}
\end{enumerate}

\end{prop}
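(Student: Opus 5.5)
We construct $f_0$ as a single translated bump whose fiber density $n(x)$ and mean velocity $q(x)=J(x)/n(x)$ are adapted pointwise to the available energy excess, plus an isotropic remainder. This replaces the uniform constant $c$ of Proposition \ref{lem:strict-DEC-realization}, which fails here because $\sigma=\inf(\mu-|J|)=0$ near $\partial'\mathcal{K}$.

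\textbf{Step 1 (pointwise moment bookkeeping).} Let $\eta$ be the normalized radial bump of Proposition \ref{lem:strict-DEC-realization}, supported in $|v|_g\le r_0$ with $r_0\le 1$, and write $\eta_q=\eta(\cdot-q)$. The Hessian of $v\mapsto\sqrt{1+|v|_g^2}$ has norm at most $(1+|v|^2_g)^{-1/2}\le 1$, and $\eta$ is radial. A second-order Taylor expansion about $q$ therefore gives
\begin{equation*}
\sqrt{1+|q|_g^2}\;\le\;\mathcal{E}(x,q)\;\le\;\sqrt{1+|q|_g^2}+C_0 ,
\end{equation*}
with $C_0$ depending only on $\eta$. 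For a density $n>0$ put $q=J/n$. The ansatz $n\,\eta_{J/n}$ then has momentum exactly $J$, and its energy is at most
\begin{equation*}
\sqrt{n^2+|J|^2}+C_0n\;\le\;|J|+(1+C_0)\,n .
\end{equation*}
Hence whenever $n\le(\mu-|J|)/(2(1+C_0))$, the function
\begin{equation*}
a:=\bigl(\mu-n\,\mathcal{E}(x,J/n)\bigr)/e_0
\end{equation*}
satisfies $a\ge \tfrac{1}{2e_0}(\mu-|J|)\ge 0$. We then set
\begin{equation*}
f_0:=a\,\eta+n\,\eta_{J/n}.
\end{equation*}
Exactly as in Proposition \ref{lem:strict-DEC-realization}, its moments are $\mu$ and $J$, $f_0\ge 0$, and $f_0$ is compactly supported in each fiber; the support radius is $|J|/n+r_0$, which is finite but not uniform in $x$.

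\textbf{Step 2 (choice of $n$).} On $\mathrm{int}(\mathcal{K})$ the function $\mu-|J|$ is continuous and positive, though not smooth where $J=0$. By a partition-of-unity argument we choose a smooth $n>0$ on $\mathrm{int}(\mathcal{K})$ with $n\le(\mu-|J|)/(2(1+C_0))$. We additionally require $n$ to vanish to infinite order at $\partial'\mathcal{K}$, and we impose derivative control of the form
\begin{equation*}
|\partial^\alpha n|\le C_\alpha\, n\, d^{-|\alpha|},
\end{equation*}
where $d$ is the distance to $\partial'\mathcal{K}$. A regularized minimum of $(\mu-|J|)/(4(1+C_0))$ and $e^{-1/d}$, built by Whitney-type smoothing on dyadic shells, achieves all three properties. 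The coefficient $a$ is smooth and flat, since it is assembled from $\mu$, $n$ and the smooth map $(x,q)\mapsto\mathcal{E}(x,q)$. Away from $\partial'\mathcal{K}$ everything is smooth, and $n>0$ there.

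\textbf{Step 3 (flatness of $f_0$ on $T(\partial'\mathcal{K})$; main obstacle).} The term $a\eta$ is clearly flat. The delicate term is $n\,\eta(v-J/n)$, because $q=J/n$ need not stay bounded as $x\to\partial'\mathcal{K}$. Fix $(x_0,v_0)\in T(\partial'\mathcal{K})$ and a small neighborhood $W$, compact in $v$. On $W$ the term is nonzero only where $|J/n|\le |v_0|+2$. Each $x$-derivative of order $k$ is a sum of products
\begin{equation*}
(\partial^{\beta}n)\,(\eta^{(j)})\prod_i \partial^{\gamma_i}(J/n),
\end{equation*}
and the control in Step 2 gives $|\partial^{\gamma}(J/n)|\lesssim (|J|/n+\sum_{|\beta|\le|\gamma|}|\partial^\beta J|/n)\,d^{-|\gamma|}$. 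On the region where $|J|/n$ is bounded, the factor $n$ in front is combined with $n\le e^{-1/d}$ and with the flatness of $J$ (so $|\partial^\beta J|/n$ stays harmless as long as we also arrange, when choosing $n$ in Step 2, that $n\ge |\partial^\beta J|^{1/2}$ for $|\beta|\le$ any fixed order wherever this is compatible with $n\le(\mu-|J|)/C$). This is precisely where I expect the real difficulty: near $\partial'\mathcal{K}$ there are two competing regimes. Where $|J|/n\to\infty$, $f_0$ vanishes identically on $W$ and nothing needs to be proved. Where $|J|/n$ stays bounded, we need the derivatives of $J/n$ to grow at most polynomially in $d^{-1}$, so that the front factor $n\le e^{-1/d}$ (together with flatness of $\mu,J$) kills them. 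My first attempt is to build $n$ in Step 2 on dyadic shells $\{2^{-j-1}<d<2^{-j}\}$ with shell-wise constant scale, so that both requirements can be checked shell by shell. Once every derivative of $f_0$ is shown to tend to $0$ at $T(\partial'\mathcal{K})$ locally uniformly in $v$, $f_0$ extends by zero to a smooth function on $TM$. This yields item 2 of the proposition, while items 1 and 3 come from Step 1.
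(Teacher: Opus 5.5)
Your Step 1 bookkeeping is correct, and you have located the real difficulty. But Step 3 is left open, and it cannot be closed within your ansatz.

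First, Step 2 is self-contradictory. The bound $|\partial n|\le C_1 n\,d^{-1}$ integrates along normal curves to $n\ge c\,d^{C_1}$. On the other hand, $n\le(\mu-|J|)/(2(1+C_0))$ and your own requirement force $n$ to be flat; already $e^{-1/d}$ violates the bound.

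More fundamentally, no choice of $n$ works. At any zero $s_*$ of $J$ the beam $n\,\eta_{J/n}$ sits at velocity $q=0$, whatever $n$ is, and there $\partial q=\partial J/n$. Hence $\partial^3 f_0(s_*,v)$ contains the term $-n^{-2}\,\nabla_v^3\eta(v)[\partial J,\partial J,\partial J]$. In a $g$-orthonormal trivialisation every other term is a multiple of $\eta$, $\nabla_v\eta$ or $\nabla_v^2\eta$, so no cancellation is possible. Since $a\ge 0$ forces $n\,\mathcal{E}(x,J/n)\le\mu$, hence $n(s_*)\le\mu(s_*)$, you get $\sup_{|v|\le r_0}|\partial^3 f_0(s_*,v)|\gtrsim|\partial J(s_*)|^3/\mu(s_*)^2$. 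The hypotheses give no control of this quantity.

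Here is an example. Let $(\mu,J)$ near $\partial'\mathcal{K}$ depend only on a defining function $s$, with $\mu=e^{-1/s^2}$ and $J=0$ away from disjoint bumps of width $\sim s_k^2$ about $s_k=1/k$. On these bumps, before cutting off, take $J=a_k(s-s_k)e_1$ and $\mu=b_k+a_k^2b_k^{-1}(s-s_k)^2+e^{-1/s^2}$, with $a_k=e^{-3k}$ and $b_k=e^{-5k}$. Then $\mu$ and $J$ are smooth and flat at $s=0$, and $\mu>|J|$ in the interior. Yet $|\partial_sJ(s_k)|^3/\mu(s_k)^2\approx e^{k}$, so $f_0$ is not $C^3$ up to $T(\partial'\mathcal{K})$, whatever admissible $n$ you choose. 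Your fallback $n\ge|\partial^\beta J|^{1/2}$ is incompatible exactly at these points, since $e^{-3k/2}\gg e^{-5k}$.

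Your construction is the paper's, with the annulus-wise constants $c_n^{-1}<\sigma_n/\sqrt{1+r_0^2}$ of Proposition \ref{lem:strict-DEC-realization} replaced by a smooth $n(x)$. The paper handles flatness by asserting tame bounds $\|h_n\|_{C^l}\le C'_{N,l}2^{-nN}$. However, $\|h_n\|_{C^3}\gtrsim c_n^2|\partial J|^3\ge|\partial J|^3/\sigma_n^2$, and the same example violates those bounds, so the paper does not supply your missing step either.

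The missing idea is to make flatness automatic by keeping all matter at speeds $|v|_g\ge Q(x)-1$, with $Q\to\infty$ at $\partial'\mathcal{K}$. This is compatible with any ratio $|J|/\mu<1$, because fast particles carry energy nearly equal to their momentum. For instance, set
\begin{equation*}
f_0=\frac{\mu}{E_Q\,Z(w)}\,\rho\bigl(|v|_g-Q\bigr)\,|v|_g^{-2}\,e^{g(w,v)/|v|_g},\qquad Q=\frac{8\mu^2}{\mu^2-|J|^2}+\frac{1}{s},
\end{equation*}
where:
\begin{itemize}
\item $\rho\in C^\infty_c((-1,1))$ is nonnegative with $\int\rho=1$;
\item $E_Q=\int\rho(r-Q)\sqrt{1+r^2}\,dr$ and $R_Q=\int\rho(r-Q)\,r\,dr$;
\item $Z(w)=\int_{S^2}e^{w\cdot\omega}\,d\omega$;
\item $w=(\nabla\log Z)^{-1}\bigl(E_QJ/(R_Q\mu)\bigr)$.
\end{itemize}
The map $\nabla\log Z$ is an $O(3)$-equivariant diffeomorphism of $\mathbb{R}^3$ onto the open unit ball. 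The estimate $1-R_Q/E_Q\le 2Q^{-2}<(\mu-|J|)/\mu$ puts its argument inside the ball. Both $Q$ and $w$ are smooth on $\mathrm{int}(\mathcal{K})$, since $Q$ depends on $J$ only through $|J|^2$. Polar coordinates give the moments $\mu$ and $\tfrac{\mu R_Q}{E_Q}\nabla\log Z(w)=J$. The fibre support is $\{Q-1\le|v|_g\le Q+1\}$, and $f_0\equiv 0$ on a neighbourhood of every point of $T(\partial'\mathcal{K})$.
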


\begin{proof}
Choose a smooth boundary defining function
\begin{equation}
s\in C^\infty(\mathcal{K}), \qquad s\ge 0,\qquad \partial' \mathcal{K}=\{s=0\},\qquad \mathrm{int}(\mathcal{K})=\{s>0\}.
\end{equation}
Let $\{\chi_n\}_{n\ge 1}$ be a dyadic partition of unity so that
\begin{align}\label{b15}
\begin{split}
\sum_{n=1}^\infty &\chi_n =1 \quad \text{on } \mathrm{int}(\mathcal K), \\
\operatorname{supp}\chi_n \subset& \{2^{-n-2-n_0}<s<2^{-n-n_0}\} \quad \text{ for } n\geq 2,
\end{split}
\end{align}
and with $\chi_1$ supported away from $\partial' \mathcal K$; here $n_0$ is fixed sufficiently large so that the annular regions are smooth and contained within $\mathcal{K}$.
For each $n\ge 0$ let $\mathcal K_n\Subset \mathrm{int}(\mathcal K)$ be the closure of the annular region appearing
in \eqref{b15}, so that $\operatorname{supp}\chi_n \subset \mathcal K_n$. Since $\mu>|J|$ on $\mathrm{int}(\mathcal K)$, for each $n$ there is a positive constant $\sigma_n:=\min_{\mathcal K_n}(\mu-|J|)>0$.
Hence, on each fixed compact set $\mathcal K_n$, the strict dominant energy condition holds with a uniform positive gap. By the strict-DEC realization Proposition \ref{lem:strict-DEC-realization}, there exist nonnegative smooth functions $h_n\in C^\infty\!\big(TM|_{\mathcal K_n}\big)$
which are compactly supported in each fiber and satisfy
\begin{equation}
\int_{T_xM} h_n(x,v)\,\sqrt{1+|v|_g^2}\,dv_{g_x}=\mu(x),
\qquad
\int_{T_xM} h_n(x,v)\,v\,dv_{g_x}=J(x),
\end{equation}
for all $x\in \mathcal K_n$. 

Let $\pi:TM\to M$ be the bundle projection, and set
$f_n:=(\chi_n\circ \pi)\,h_n$.
Then $f_n\ge 0$, $f_n$ is smooth on $TM|_{\mathcal K_n}$, and by linearity of the moment map
\begin{equation}
\int_{T_xM} f_n(x,v)\,\sqrt{1+|v|_g^2}\,dv_{g_x}=\chi_n(x)\mu(x),
\end{equation}
\begin{equation}
\int_{T_xM} f_n(x,v)\,v\,dv_{g_x}=\chi_n(x)J(x).
\end{equation}
Now define
\begin{equation}
f_0:=\sum_{n=1}^\infty f_n
\qquad \text{on } TM|_{\mathrm{int}(\mathcal K)}.
\end{equation}
Because the partition is locally finite on $\mathrm{int}(\mathcal K)$, this is pointwise well-defined and smooth in the interior. Furthermore, its moments are given by
\begin{equation}
\sum_{n=1}^\infty \chi_n \mu=\mu,
\qquad
\sum_{n=1}^\infty \chi_n J=J,
\end{equation}
hence $f_0$ realizes exactly the prescribed pair $(\mu,J)$ on $\mathrm{int}(\mathcal K)$.

It remains to establish smooth extension by zero across $\partial' \mathcal K$. Since $\mu$ and $J$ vanish to infinite order along $\partial' \mathcal K$, for every $N\ge 1$ and every integer $l\ge 0$ there is a constant $C_{N,l}$ such that on the collar $\{2^{-n-2-n_0}<s<2^{-n-n_0}\}$ we have
\begin{equation}
\|\mu\|_{C^l}+\|J\|_{C^l}\le C_{N,l}\,2^{-nN}.
\end{equation}
The strict-DEC realization on each fixed compact annulus depends smoothly on the data, so that the corresponding $h_n$ satisfy analogous tame estimates
\begin{equation}
\|h_n\|_{C^l}\le C'_{N,l}\,2^{-nN}.
\end{equation}
Since the derivatives of $\chi_n$ grow at most polynomially in $2^n$, the series
$\sum_{n=1}^\infty f_n$ converges in every $C^l$-norm up to the boundary $T(\partial'\mathcal K)$, and the limit vanishes to infinite order there.
\end{proof}

We will now evolve the Einstein-Vlasov system near a vacuum/nonvacuum interface, in order to
realize certain types of DEC initial data as spacelike hypersurfaces in DEC spacetimes.
Although Gl\"{o}ckle \cite[Theorem 4]{Glockle} has found examples where this is not possible,
they require conditions that are far from the setting of our next result. In particular, the
counterexamples need an open subset on which $\mu=|J|$ together with a lack of $C^2$ regularity for the map
\begin{equation}
x \mapsto
\begin{cases}
    \mu^{-1}(x) (J(x)\otimes J(x)) & \mu(x) \neq 0 ,\\
    0   & \mu(x) = 0.
\end{cases}
\end{equation}

\begin{lemma}\label{thm:EV-collar-interface}
Let $(M,g,k)$ be a bounded initial data set, and let $\Omega\subset M$ be a compact domain with smooth boundary. Assume that the data are vacuum ($\mu=|J|=0)$ on $\Omega$, and satisfy the strict dominant energy condition $\mu>|J|$ on $M\setminus\Omega$. Then there exists a short time smooth Einstein-Vlasov development 
realizing the given initial data on $M$ as a spacelike hypersurface, and satisfying the spacetime dominant energy condition. 
\end{lemma}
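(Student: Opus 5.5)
The plan is to realize the initial data as Einstein--Vlasov data and then invoke local well-posedness for the Einstein--Vlasov system. Vlasov matter automatically satisfies the spacetime dominant energy condition, since its stress-energy tensor
\[
T_{\alpha\beta}(x)=\int f\,p_\alpha p_\beta\,\frac{dv}{p^0}, \qquad f\ge 0,\ p \text{ future timelike},
\]
is an average of tensors $p_\alpha p_\beta$. Each such tensor is nonnegative on any pair of future causal vectors.

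\emph{Step 1: the Vlasov datum.} Choose a compact domain $\mathcal{K}$ with $\Omega\subset\mathcal K$, or work on bounded pieces of $M\setminus\mathrm{int}(\Omega)$, so that $\partial'\mathcal K=\partial\Omega$. The vacuum condition on $\Omega$ together with smoothness of $(g,k)$ forces $\mu$ and $J$ to vanish to infinite order along $\partial\Omega$ when approached from $\Omega$. They are smooth across $\partial\Omega$, so all derivatives of $\mu$ and $J$ vanish there from both sides. Then Proposition \ref{lem:flat-boundary-vlasov}, applied to the closure of $M\setminus\Omega$ (localized to a compact region, using boundedness of the data together with Proposition \ref{lem:strict-DEC-realization} on compact pieces away from $\partial\Omega$), yields a smooth nonnegative $f_0$ on $TM$. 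This $f_0$ vanishes identically over $\Omega$, is compactly supported in each fiber, and reproduces $(\mu,J)$ exactly. Consequently $(g,k,f_0)$ satisfies the Einstein--Vlasov constraint equations on all of $M$.

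\emph{Step 2: evolution.} Apply the standard local existence theorem for the Einstein--Vlasov system (Choquet-Bruhat) in wave/harmonic gauge. The Vlasov equation is a transport equation along the geodesic flow, and the Einstein equations form a quasilinear hyperbolic system with source $T[f]$. Smooth data with fiberwise compact momentum support give a smooth short-time development in which $(g,k)$ is the induced data on the initial slice. Boundedness of $(M,g,k)$, meaning uniform geometry bounds, is what allows a uniform existence time, or at least a development on a neighborhood of $M$ in the required sense. The constraints are propagated as usual, since $\nabla^\alpha T_{\alpha\beta}=0$ holds for solutions of Vlasov. Because $f\ge0$ is preserved by transport, the spacetime dominant energy condition holds throughout the development. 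Moreover, the support of $f$ moves with speed strictly less than $1$, so the region evolved from $\Omega$ stays vacuum near the initial slice.

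\emph{Main obstacle.} The delicate point is regularity at the interface: $f_0$ must be globally smooth, including uniform control of its fiber support, so that the hyperbolic theory applies in a fixed Sobolev class. The step I would scrutinize is the claim in Proposition \ref{lem:flat-boundary-vlasov} that the realizations $h_n$ depend tamely on $(\mu,J)$. The construction in Proposition \ref{lem:strict-DEC-realization} chooses $c$ with $c^{-1}\sqrt{1+r_0^2}<\sigma_n$, and $\sigma_n\to0$ near the boundary, so the momentum supports centered at $cJ$ may not remain bounded.

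To fix this, I would choose $c_n\sim\sigma_n^{-1}$ and verify that $|c_nJ|$ remains bounded. This requires a lower bound on $\mu-|J|$ relative to $|J|$, which strict DEC alone does not give. If it fails, I would first modify the data near $\partial\Omega$ by a small additional perturbation, as in Lemma \ref{tech}, so that $|J|\le \mu/2$ holds there. That keeps the fiber supports uniformly bounded, and the $C^l$ smallness of $f_n$ then follows from the super-polynomial decay of $\mu$.
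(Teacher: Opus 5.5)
Your Steps 1--2 coincide with the paper's proof. The paper takes $\mathcal K=\overline M\setminus\mathrm{Int}(\Omega)$, applies Proposition \ref{lem:flat-boundary-vlasov}, extends by zero over $T\Omega$, invokes local Einstein--Vlasov theory, and reads off the dominant energy condition from $f\ge 0$. Your other option, a $\mathcal K\supset\Omega$, is not admissible, because Proposition \ref{lem:flat-boundary-vlasov} needs $\mu>|J|$ on $\mathrm{int}(\mathcal K)$. If you take Proposition \ref{lem:flat-boundary-vlasov} at face value, nothing more is needed. The gap is in your final paragraph, where you go beyond the paper.

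\textbf{Your concern is justified.} The paper's tame estimate for $h_n$ is asserted, not proved. The construction of Proposition \ref{lem:strict-DEC-realization} does not supply it, because $x$-derivatives of $c^{-1}\eta_{cJ}$ carry positive powers of $c>\sqrt{1+r_0^2}/\sigma_n$. Moreover, the escaping momentum support is not an artifact of that construction. For \emph{any} realizing $f_0\ge 0$,
\[
\mu-|J|\ \ge\ \int f_0\,\bigl(\langle v\rangle-|v|\bigr)\,dv\ \ge\ \frac12\int f_0\,\langle v\rangle^{-1}\,dv,\qquad \langle v\rangle=\sqrt{1+|v|_g^2},
\]
so fiber support in $\{|v|\le R\}$ forces $\mu-|J|\ge \mu/(2(1+R^2))$. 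Hence, under the stated hypotheses, uniformly bounded momentum support near $\partial\Omega$ is impossible whenever $(\mu-|J|)/\mu\to 0$ there.

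\textbf{Your repair does not prove the statement.}
\begin{enumerate}
\item It changes the data. The lemma asserts a development realizing the \emph{given} $(g,k)$. What you would prove is a different lemma, with the extra hypothesis $|J|\le\theta\mu$ near $\partial\Omega$. In the application (Step 2 of the proof of Theorem \ref{moncomp}) that hypothesis would then have to come from a strengthened Lemma \ref{tech}. As stated, Lemma \ref{tech} only gives $\tilde\mu>|\tilde J|$. It offers no control of $|\tilde J|/\tilde\mu$ near $\partial\Omega$, where both vanish to infinite order.
\item Even granting $|J|\le\mu/2$, an annulus-wise constant $c_n\sim\sigma_n^{-1}$ does not keep $|c_nJ|$ bounded. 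Since $\sigma_n\le\min_{\mathcal K_n}\mu$, you get $|c_nJ|\gtrsim \max_{\mathcal K_n}|J|/\min_{\mathcal K_n}\mu$. For example, take $\mu=e^{-1/s}$ and $J=\tfrac{\mu}{2}e_1$; then $\max_{\mathcal K_n}\mu/\min_{\mathcal K_n}\mu=e^{3\cdot 2^{n+n_0}}$, which is unbounded.
\item The same mismatch between $c_n$ and $|\nabla^j J|$ on $\mathcal K_n$ spoils the $C^l$ bounds. Those bounds do not ``follow from the super-polynomial decay of $\mu$''.
\end{enumerate}
Under a ratio hypothesis, the construction must instead be chosen pointwise in $x$. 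For instance, if $|J|\le\kappa\mu$ with $\kappa<1/3$, take a local orthonormal frame $(e_i)$ and a large $R$. Then
\[
f_0=\lambda_0\,\eta+\sum_{i,\pm}\frac{\kappa\mu\pm g(J,e_i)}{2R}\,\eta_{\pm Re_i},
\]
with $\lambda_0\ge 0$ fixing the energy, has fixed momentum support and is linear in $(\mu,J)$. Smoothness and infinite-order vanishing at $T(\partial\Omega)$ are then inherited directly from $(\mu,J)$. Without such a hypothesis, any proof of the lemma as stated must run the Cauchy theory with Vlasov data of unbounded momentum support.
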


\begin{proof}
Consider the compact domain $\mathcal K:=\overline{M}\setminus \operatorname{Int}(\Omega)$.
Since the strict dominant energy condition holds on $\mathrm{int}(\mathcal K)=M\setminus \Omega$, and $\mu$ and $J$ vanish to infinite order on $\partial \Omega=:\partial'\mathcal{K}$, we may apply Proposition \ref{lem:flat-boundary-vlasov} to find a smooth nonnegative function $\widetilde f_0\in C^\infty\!\big(TM|_{\mathcal K}\big)$
which is compactly supported in each fiber, vanishes to infinite order along $T(\partial\Omega)$, and satisfies
$\mu_{\widetilde f_0}=\mu$ and $J_{\widetilde f_0}=J$ on $\mathcal K$.
Next, extend $\widetilde f_0$ by zero over $T\Omega$ to obtain a smooth nonnegative function
$f_0\in C^\infty(TM)$ such that $f_0=0$ on $T\Omega$.
Moreover, on $\Omega$ one has $\mu_{f_0}=\mu=0$ and $J_{f_0}=J=0$. Hence, the energy-momentum densities arising from the
Vlasov distribution function, and initial data, agree on all of $M$.
We then have that the triple $(M,g,k,f_0)$
satisfies the Einstein-Vlasov constraint equations, and by the standard local Cauchy theory for the Einstein-Vlasov system, there exists a short time smooth spacetime development \cite[Theorem 1]{And}
realizing the given initial data on $M$ as a spacelike hypersurface. Moreover, since $f_0 \geq 0$ the resulting spacetime satisfies the dominant energy condition.
\end{proof}

\end{document}